\def\fskip#1{}
\newtheorem{theorem}{Theorem}
\newtheorem{conjecture}[theorem]{Conjecture}
\newtheorem{definition}{Definition}
\newtheorem{lemma}{Lemma}
\newtheorem{proposition}[theorem]{Proposition}
\newtheorem{remark}{Remark}
\def\1{{\bf 1}}
\def\d{\delta}
\newcommand{\remove}[1]{}
\def\argmax{\mathop{\rm argmax}}
\begin{document}
\title{A Fixed Point Framework for the Existence of EFX Allocations}
%\vspace{-0.5cm}
\author{\authorblockN{S. Rasoul Etesami*}\thanks{*Department of Industrial and Systems Engineering, Department of Electrical and Computer Engineering, and Coordinated Science Laboratory, University of Illinois Urbana-Champaign,  Urbana, IL 61801 (Email: etesami1@illinois.edu).}
%\thanks{This research was supported by the AFOSR YIP FA9550-23-1-0107.}
}
\maketitle

\begin{abstract}
We consider the problem of the existence of an envy-free allocation up to any good (EFX) for linear valuations and establish new results by connecting this problem to a fixed-point framework. Specifically, we first use randomized rounding to extend the discrete EFX constraints into a continuous space and show that an EFX allocation exists if and only if the optimal value of the continuously extended objective function is nonpositive. In particular, we demonstrate that this optimization problem can be formulated as an unconstrained difference-of-convex (DC) program, which can be further simplified to the minimization of a piecewise linear concave function over a polytope. Leveraging this connection, we show that the proposed DC program has a nonpositive optimal objective value if and only if a well-defined continuous vector map admits a fixed point. Crucially, we prove that the reformulated fixed-point problem satisfies all the conditions of Brouwer’s fixed-point theorem, except that self-containedness is violated by an arbitrarily small positive constant. To address this, we propose a slightly perturbed continuous map that always admits a fixed point. This fixed point serves as a proxy for the fixed point (if it exists) of the original map, and hence for an EFX allocation through an appropriate transformation. Our results offer a new approach to establishing the existence of EFX allocations through fixed-point theorems. Moreover, the equivalence with DC programming enables a more efficient and systematic method for computing such allocations (if one exists) using tools from nonlinear optimization. Our findings bridge the discrete problem of finding an EFX allocation with two continuous frameworks: solving an unconstrained DC program and identifying a fixed point of a continuous vector map.
\end{abstract}

% REQUIRED
\begin{keywords}
Fair Division, Envy Freeness, Continuous Extension, Randomized Rounding, Probabilistic Methods, Difference-of-Convex Program, Fixed Point Theory, Submodular Optimization.
\end{keywords}

\section{Introduction}

The fair division of indivisible goods has long been a central topic in economics, game theory, and algorithmic mechanism design \cite{caragiannis2019envy,akrami2025efx,amanatidis2023fair}. Unlike divisible resources (e.g., in cake-cutting), the indivisibility of goods introduces unique challenges, as standard fairness notions such as \emph{envy-freeness (EF)}—where no agent prefers another's bundle of items to their own—may not always be achievable \cite{plaut2018almost}. In such cases, relaxed notions of fairness have been proposed to provide more practical guarantees.

One such relaxation, \textit{envy-freeness up to one item (EF1)}, introduced by~\cite{budish2011combinatorial}, guarantees that no agent envies another after the hypothetical removal of a single good from the other's bundle. EF1 has been widely accepted as a fairness requirement, and its existence, along with efficient algorithms for computing EF1 allocations, has made it a practical benchmark for fairness~\cite{lipton2004approximately}. However, EF1 may still permit significant envy in certain cases, motivating the study of a stronger fairness criterion known as \textit{envy-freeness up to any item (EFX)}. Formally introduced by~\cite{caragiannis2019unreasonable}, an allocation is EFX if no agent envies another after the removal of \textit{any} single good from the other's bundle. Thus, EFX is a strong relaxation of envy-freeness in the context of indivisible goods and serves as a close approximation to EF allocations. Despite its strong theoretical appeal, however, \textit{the existence of EFX allocations remains one of the most important open problems in the fair division of indivisible goods}. Unlike EF1, EFX allocations are not known to exist in all cases, and the general question of their existence for arbitrary additive valuations with more than three agents remains unresolved.

\subsection{Related Work}

The work of~\cite{plaut2018almost} used the leximin solution to establish the existence of EFX allocations in several contexts, sometimes in conjunction with Pareto optimality—a well-known notion of economic efficiency. \cite{caragiannis2019unreasonable} provided a key structural insight by showing that \textit{Nash social welfare} (NSW) optimal allocations—those that maximize the geometric mean of utilities—are EF1 and Pareto optimal when agents have additive valuations (see also~\cite{barman2020finding}). This connection between fairness and efficiency has renewed interest in exploring the existence of allocations that are approximately EFX-fair and NSW-optimal~\cite{feldman2024optimal}.

Since the existence of EFX allocations is not known in general, subsequent work has sought to identify settings where EFX allocations are guaranteed to exist. For instance,~\cite{plaut2018almost} established the existence of EFX allocations for two agents with arbitrary monotone valuations, as well as for many agents with identical general monotone valuations. However, they also argued that extending the result to three or more agents poses significant challenges. \cite{babaioff2021fair} and~\cite{bu2023efx} showed the existence of EFX allocations for binary valuations, where the marginal gain in value from receiving an additional item is either $0$ or $1$. \cite{chaudhury2024efx} established the existence of EFX allocations for three agents with additive valuations using constructive methods. This result was extended in~\cite{akrami2025efx}, which showed the existence of EFX allocations for three agents with valuations more general than additive—further supporting the conjecture that EFX allocations may exist in broader settings. 

However, generalizing these results to more than three agents and to non-identical valuations has proven elusive. The work of~\cite{benade2023existence} considered stochastic valuations beyond the additive setting and derived conditions on the number of agents and goods under which EFX allocations exist with high probability. Moreover,~\cite{christodoulou2023fair} established the existence of EFX allocations in settings where valuations are represented via a graph of arbitrary size, where vertices correspond to agents and edges to items. In their setting, an item (edge) has zero marginal value to all agents (vertices) not incident to it, and each vertex may have an arbitrary monotone valuation over the set of incident edges.

From a computational perspective, it is known that computing EF allocations, even when part of the allocation is fixed and the task is to assign the remaining items, is an NP-hard problem~\cite{deligkas2025complexity}. Moreover, the problem of finding EFX allocations has been shown to be nontrivial, even in cases where existence is guaranteed. Currently, there is no known polynomial-time algorithm for computing EFX allocations for three or more agents with additive valuations~\cite{amanatidis2023fair}. In addition,~\cite{plaut2018almost} showed that, for general valuation functions, any deterministic algorithm requires an exponential number of value queries to find an EFX allocation.

To bypass known hardness results and the potential non-existence of EFX allocations in general, recent work has considered allowing some items to remain unallocated—a concept known as \textit{EFX with charity} or \textit{partial EFX}. That is, by donating some items to a charity, one can distribute the remaining items in a fair way. This relaxation ensures fairness among the participating agents, even if it means not allocating all goods. \cite{caragiannis2019envy} first showed that under additive valuations, there exists an EFX allocation of a subset of items with a NSW that is at least half of the maximum possible NSW for the original set of items. Subsequently, a line of work has sought to reduce the number of unallocated items as much as possible. \cite{chaudhury2021little} showed that for general valuations and $n$ agents, there always exists an EFX allocation that sends fewer than $n$ items to charity, and no agent values the charity items more than her own bundle. This line of work highlights a compelling trade-off between fairness and completeness, suggesting that a limited sacrifice in allocative efficiency can yield stronger fairness guarantees.

Given the challenges of computing or even verifying EFX allocations for general instances, researchers have proposed several approximation frameworks. Instead of focusing solely on exact EFX allocations, a growing line of work has explored the computation of allocations that are approximately EFX, under various notions of approximation. One prominent example is the notion of $\alpha$-EFX allocation, which provides a multiplicative approximation in terms of the values obtained by the agents. Specifically, an allocation is called $\alpha$-EFX if, for every agent, the value they assign to their own bundle is at least $\alpha$ times the value they assign to any other agent’s bundle after the removal of any single good from that bundle. The work \cite{plaut2018almost} was the first to study this notion, showing that $\frac{1}{2}$-EFX allocations always exist, even for subadditive valuation functions. Later,~\cite{chan2019maximin} showed that such allocations can be computed in polynomial time. In fact, for agents with linear valuation functions, the so-called \emph{Envy-Cycle Elimination algorithm} can be used to compute a $\frac{1}{2}$-EFX allocation. Along similar lines, approximate EFX allocations with a small number of unallocated goods (charity) were considered in~\cite{akrami2025efx}, where the existence of a $(1 - \epsilon)$-EFX allocation with at most $\tilde{O}\big((n/\epsilon)^{1/2}\big)$ charity items was established.

In sharp contrast to prior work, which often uses algorithmic constructions to prove the existence of an EFX allocation or to approximate it under certain assumptions, this work adopts a probabilistic method to study the existence of an EFX allocation. Our approach reduces the problem of finding an EFX allocation with linear valuations to solving a continuous nonlinear program. Aside from a parametric integer linear programming formulation for the existence of EFX allocations~\cite{bredereck2023high}, we are not aware of any prior work that uses probabilistic methods and continuous extensions for analyzing the existence of EFX allocations. We refer to \cite{amanatidis2023fair} for a comprehensive survey of recent results on the fair division of indivisible goods.

\vspace{-0.2cm}
\subsection{Contributions and Organization}

In this work, we take a fundamentally different approach from the existing literature by formulating the existence of an EFX allocation as a fixed point problem. To that end, using a probabilistic argument, we first provide a continuous extension of the EFX constraints based on row-wise randomized rounding. Leveraging this continuous extension, we establish an equivalence between the existence of an EFX allocation and the sign of the optimal objective value of a constrained nonlinear program. We then apply an appropriate change of variables to show that this constrained nonlinear program can be reformulated as an unconstrained difference-of-convex (DC) program. This reformulation enables the use of computational tools from DC programming to systematically search for EFX allocations. Further, we provide an alternative characterization of the existence of an EFX allocation in terms of a fixed point of a continuous vector map. This map satisfies all the conditions of the Brouwer’s fixed-point theorem, except that it violates the self-containment property by an arbitrarily small positive constant. To address this, we propose a slightly perturbed continuous map that always admits a fixed point. This fixed point can serve as a proxy for the fixed point of the original map (if one exists), and therefore, for an EFX allocation. As a result, our work bridges the discrete problem of finding an EFX allocation with the continuous problem of solving an unconstrained DC program, as well as with the problem of finding a fixed point of a continuous vector map.

The rest of the paper is organized as follows. Section~\ref{sec:problem} introduces the problem setting. A preliminary convex program relaxation for EFX allocations, based on the Lovász extension, is presented in Section~\ref{sec:lovasz}. Building on the insights obtained, Section~\ref{sec:row} provides a continuous extension of the EFX allocations using row-wise independent rounding. In Section~\ref{sec:DC}, we formulate the EFX problem as a DC program. A fixed-point formulation for EFX allocations is given in Section~\ref{sec:fixed}. Finally, conclusions and open directions for future research are discussed in Section~\ref{sec:con}.

\section{Problem Formulation}\label{sec:problem}

Consider a set of indivisible goods (items) $M = \{1, 2, \ldots, m\}$ and a set of agents $N = \{1, 2, \ldots, n\}$. Each agent $i$ has a nonnegative valuation for any bundle $S \subseteq M$ of items, denoted by $v_i(S)$. An envy-free allocation up to any good (EFX) is a partition $X = (X_1, \ldots, X_n)$ of the goods $M$ among the $n$ agents such that no agent envies the bundle of any other agent after the removal of \emph{any} single good from that bundle. More precisely, the partition $X = (X_1, \ldots, X_n)$ is an EFX allocation if
\begin{align}\label{eq:EFX-original}
v_i(X_i)\ge v_i(X_j\setminus \{k\}),\ \  \forall k\in X_j,\  \forall i,j\in N.
\end{align} 

\begin{definition}
The valuation functions are called \emph{linear} if, for any $i \in N$ and $S \subseteq M$, we have $v_i(S) = \sum_{k \in S} v_{ki}$, where $v_{ki} = v_i(\{k\}) \ge 0$ is the value that agent~$i$ assigns to item~$k$. 
\end{definition} 

The existence of EFX allocations is not known even for linear valuations with more than three agents, which are arguably the most commonly used type of valuations~\cite{amanatidis2023fair}. In this work, we focus on the case of \emph{linear valuations}, under which the EFX conditions~\eqref{eq:EFX-original} can be expressed in a simpler form:
\begin{align}\nonumber
v_i(X_i)\ge v_i(X_j)-v_{ki},\ \  \forall k\in X_j, \forall i,j\in N,
\end{align}
or equivalently as
\begin{align}\label{eq:condition-linear-efx}
v_i(X_i)\ge v_i(X_j)-\min_{k\in X_j} v_{ki},\  \forall i,j\in N \iff  \sum_{k\in X_i}v_{ki}\ge \sum_{k\in X_j}v_{ki}-\min_{k\in X_j} v_{ki},\  \forall i,j\in N.
\end{align}
Therefore, given the agents' valuations of items $\{v_{ki} \ge 0\}$, our goal is to determine whether there exists an allocation $X = (X_1, \ldots, X_n)$ of items to the agents that satisfies condition~\eqref{eq:condition-linear-efx}.

\section{A Convex Program Relaxation Using Lov\'asz Extension}\label{sec:lovasz}

In this section, we provide a preliminary convex program based on Lov\'asz extension of submodular functions to generate a fractional solution to the EFX problem. Inspired from this idea, in the subsequent section, we show how to use a different continuous extension to reduce the EFX problem to a fixed point problem. Moreover, the notations and definitions introduced in this section will be used throughout the paper.

Given an agent $i$ and its valuations $\{v_{ki} \ge 0, k\in M\}$, define the set functions $f_i, v_i:2^M\to \mathbb{R}_+$ as
\begin{align}\label{eq:fi-def}
&f_i(S)=\sum_{k\in S}v_{ki}-\min_{k\in S}v_{ki} \ \ \forall S\subseteq M,\cr
&v_i(S)=\sum_{k\in S}v_{ki} \ \forall S\subseteq M.
\end{align}
Note that $f_i$ is a submodular function, because for any $S \subset T \subset M$ and any item $\ell \notin T$, we have
\begin{align}\nonumber
f_i(S\cup\{\ell\})-f_i(S)&=v_{\ell i}-\min_{k\in S\cup\{\ell\}}v_{ki}+\min_{k\in S}v_{ki}\cr
&=v_{\ell i}+(\min_{k\in S}v_{k i}-v_{\ell i})^+\cr
&\ge v_{\ell i}+(\min_{k\in T}v_{ki}-v_{\ell i})^+\cr
&= f_i(T\cup\{\ell\})-f_i(T),
\end{align}
where $(a)^+\!=\max\{a,0\}$ for a real number $a$. For an arbitrary subset $X\!=(X_1,\ldots,X_n)$ of $M^n$, and any two agents $i\neq j$, define the set function $u_{ij}:2^{M^n}\to \mathbb{R}$ as\footnote{Here, we do not impose the constraint that $X$ must be a partition of $M$, and each $X_i$ can be chosen freely as a subset of $M$.
}
\begin{align}\label{eq:u_ij}
u_{ij}(X)=f_i(X_j)-v_i(X_i),
\end{align}
and note that $u_{ij}$ is a submodular set function over $M^n$ because $f_i$ is submodular and $v_i$ is a modular function. This leads us to the following lemma, which provides an equivalent characterization of EFX allocations and follows directly from the above derivations.

\begin{lemma}\label{lemm:u}
An allocation profile $X = (X_1, \ldots, X_n) \subset M^n$ is an EFX allocation if and only if $X$ is a partition of $M$ and $u_{ij}(X) \leq 0$ for all $i\neq j$, where $u_{ij}$ are the submodular functions defined in~\eqref{eq:u_ij}.
\end{lemma}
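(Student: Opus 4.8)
The plan is to prove the lemma by directly translating the linear EFX condition~\eqref{eq:condition-linear-efx} into the language of the submodular functions $u_{ij}$, so that the equivalence reduces to unwinding the definitions in~\eqref{eq:fi-def} and~\eqref{eq:u_ij}. First I would note that being a partition of $M$ is already part of the definition of an allocation, so that half of the stated equivalence is immediate; the real content is the correspondence between the inequalities $u_{ij}(X) \le 0$ and the EFX inequalities.

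For the forward implication, assume $X$ is an EFX allocation. Then $X$ is a partition and, for every ordered pair $i \neq j$, condition~\eqref{eq:condition-linear-efx} reads $v_i(X_i) \ge \sum_{k \in X_j} v_{ki} - \min_{k \in X_j} v_{ki}$, whose right-hand side is exactly $f_i(X_j)$ by~\eqref{eq:fi-def}. Rearranging gives $u_{ij}(X) = f_i(X_j) - v_i(X_i) \le 0$. The converse runs the same algebra backwards: from a partition $X$ with $u_{ij}(X) \le 0$ for all $i \neq j$ one recovers~\eqref{eq:condition-linear-efx} for all such pairs, and hence $X$ meets every EFX requirement.

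The only point requiring a remark is the diagonal case $i = j$, which the lemma deliberately omits. Here $f_i(X_i) = v_i(X_i) - \min_{k \in X_i} v_{ki} \le v_i(X_i)$ holds automatically since $v_{ki} \ge 0$, so the self-comparison $v_i(X_i) \ge f_i(X_i)$ is always true and imposes no constraint; this is why quantifying over $i \neq j$ suffices. I do not expect any genuine obstacle in this argument, as it is a definitional rewriting; the only minor bookkeeping is fixing a convention for $\min_{k \in X_j} v_{ki}$ when a bundle is empty (e.g., taking $f_i(\emptyset) = 0$), under which the equivalence continues to hold verbatim.
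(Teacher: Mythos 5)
Your proof is correct and takes essentially the same approach as the paper, which presents this lemma as following directly from the preceding derivations—namely the rewriting of the EFX conditions into~\eqref{eq:condition-linear-efx} combined with the definitions~\eqref{eq:fi-def} and~\eqref{eq:u_ij}—which is precisely the definitional unwinding you carry out. Your additional remarks on the trivial diagonal case $i=j$ and the convention $f_i(\emptyset)=0$ for empty bundles are sensible bookkeeping that the paper leaves implicit and do not alter the argument.
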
 

Unfortunately, the submodular functions in~\eqref{eq:u_ij} are neither monotone nor nonnegative, the properties that are often useful in the context of submodular optimization. However, this can be easily fixed once we impose the partition constraints. More specifically, according to Lemma \ref{lemm:u}, we are interested in feasible allocations that \emph{partition} the set of goods $M$. Let us define $V_i=\sum_{k=1}^m v_{ki}\ \forall i\in N$, and note that for any feasible allocation $X$, we have
\begin{align}\label{eq:V_i}
\sum_{j=1}^n v_i(X_j)=\sum_{j=1}^n\sum_{k\in X_j} v_{ki}=\sum_{k=1}^m v_{ki}=V_i. 
\end{align}
Without loss of generality, let us assume that $v_{ki}$ are normalized as $\frac{v_{ki}}{V_i}$ so that $V_i=1\ \forall i$. Therefore, from \eqref{eq:V_i} we have $v_i(X_i)=1-\sum_{\ell\neq i}v_i(X_{\ell})$. Substituting this relation into $u_{ij}$ and using Lemma \ref{lemm:u}, we obtain the following proposition.
\begin{proposition}\label{prop:normalized}
An allocation $X=(X_1,\ldots,X_n)$ is EFX if and only if 
\begin{align}\label{eq:equivalent-discrete}
\min\big\{F(X)=\max_{i\neq j} u_{ij}(X): X \ \mbox{is a partition of}\ M\big\}\leq 1,
\end{align} 
where $u_{ij}(X)=f_i(X_j)+\sum_{\ell\neq i}v_i(X_{\ell})$ are nonnegative monotone submodular functions in which $f_i$ and $v_i$ are defined by \eqref{eq:fi-def} with $v_{ki}$  replaced by their normalized values $\frac{v_{ki}}{V_i}$.      
\end{proposition}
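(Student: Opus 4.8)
The plan is to derive the claim directly from Lemma~\ref{lemm:u} combined with the normalization $V_i=1$, so the heart of the argument is a single substitution; the remaining work is to verify that the rewritten functions $u_{ij}$ have the three announced structural properties. First I would invoke Lemma~\ref{lemm:u}: $X$ is EFX if and only if $X$ is a partition of $M$ and $f_i(X_j)-v_i(X_i)\le 0$ for every $i\neq j$, where I am using the form~\eqref{eq:u_ij} of $u_{ij}$. Under the normalization $V_i=1$, identity~\eqref{eq:V_i} yields $v_i(X_i)=1-\sum_{\ell\neq i}v_i(X_\ell)$ for every partition $X$. Substituting this into $f_i(X_j)-v_i(X_i)$ rewrites the old expression as $f_i(X_j)+\sum_{\ell\neq i}v_i(X_\ell)-1$, which is exactly the new $u_{ij}(X)$ minus one. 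Hence, over partitions, $f_i(X_j)-v_i(X_i)\le 0$ is equivalent to $u_{ij}(X)\le 1$, and the EFX condition becomes $F(X)=\max_{i\neq j}u_{ij}(X)\le 1$. Minimizing $F$ over all partitions then shows that an EFX allocation exists precisely when $\min\{F(X): X \text{ a partition of } M\}\le 1$, which is the asserted equivalence.

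It remains to check that each $u_{ij}(X)=f_i(X_j)+\sum_{\ell\neq i}v_i(X_\ell)$ is nonnegative, monotone, and submodular on $2^{M^n}$. Nonnegativity is immediate: $f_i(S)=\sum_{k\in S}v_{ki}-\min_{k\in S}v_{ki}\ge 0$ since we subtract only the smallest of nonnegative summands, and $\sum_{\ell\neq i}v_i(X_\ell)\ge 0$ because all $v_{ki}\ge 0$. For monotonicity I would reuse the marginal computation already carried out for $f_i$ after~\eqref{eq:fi-def}, which gives the marginal of $f_i$ upon adding an item $\ell$ as $v_{\ell i}+(\min_{k\in S}v_{ki}-v_{\ell i})^+\ge 0$; together with the obvious monotonicity of the modular terms $v_i(X_\ell)$, this shows $u_{ij}$ is nondecreasing.

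For submodularity, the key observation is that $u_{ij}$ lives on the product domain $M^n$ and separates across disjoint coordinate blocks: $f_i(X_j)$ depends only on the $j$-th block, while each $v_i(X_\ell)$ depends only on the $\ell$-th block. Since $f_i$ is submodular (established following~\eqref{eq:u_ij}) and each $v_i(X_\ell)$ is modular, their sum over disjoint blocks is again submodular on $M^n$. I do not expect a genuine obstacle here; the only point requiring care is making the block-separability argument precise, so that submodularity of the individual pieces on their respective blocks transfers cleanly to submodularity of the full sum on the product lattice $2^{M^n}$.
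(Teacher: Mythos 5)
Your proposal is correct and follows essentially the same route as the paper, which obtains the proposition by exactly the substitution you perform: using \eqref{eq:V_i} with $V_i=1$ to write $v_i(X_i)=1-\sum_{\ell\neq i}v_i(X_\ell)$, plugging this into $u_{ij}$ from \eqref{eq:u_ij}, and invoking Lemma~\ref{lemm:u}. Your explicit verification of nonnegativity, monotonicity, and submodularity (via the marginal computation for $f_i$ and block-separability over the product ground set) simply spells out properties the paper asserts on the basis of its earlier derivations, and it is sound.
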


Since solving the discrete optimization problem \eqref{eq:equivalent-discrete} is difficult in general, we instead consider a continuous relaxation of that problem. To that end, we consider the following continuous extension of a submodular function, which will be used to provide a ``tight" convex relaxation of the problem \eqref{eq:equivalent-discrete}.

\begin{definition}\label{def:lovasz}
The Lov\'asz extension of a submodular function $f:2^d\to \mathbb{R}$, denoted by $f^L:\mathbb{R}^d\to \mathbb{R}$, is defined as follows. Let $x\in \mathbb{R}^d$ be an arbitrary vector, and let $\pi:[d]\to [d]$ be the sorted permutation of the coordinates of $x$, i.e., $x_{\pi_1}\ge x_{\pi_2}\ge \ldots\ge x_{\pi_d}$. Let $S^1,\ldots,S^d$ be the prefix sets of this permutation, i.e., $S^i=\{\pi_1,\pi_2,\ldots,\pi_i\}, i=1,\ldots,d$. The value of the Lov\'asz extension at $x$ is given by     
\begin{align}\nonumber
f^L(x)=\sum_{i=1}^d\Big(f(S^i)-f(S^{i-1})\Big)x_{\pi_i},
\end{align} 
where by convention we let $f(S^0)=0$.  
\end{definition}

To provide a convex relaxation for the EFX problem, let $x\in [0,1]^{m\times n}$ be an $m\times n$ matrix whose entry $x_{ki}$ represents the fraction of item $k$ that is allocated to agent $i$. Since we want $x$ to be a fractional relaxation of a feasible allocation $X$, we impose the constraints that $x$ must belong to the feasible polytope $P:=\{x\in [0,1]^{m\times n}: \sum_{i=1}^n x_{ki}=1, \forall k\}$. Moreover, since $u_{ij}(X)$ is a submodular function, we can consider its lov\'asz extension, denoted by $u^L_{ij}(x)$, which is a piecewise linear convex function, and is known to be the ``tightest" convex extension of $u_{ij}$ whose values coincide at any integral point \cite{lovasz1983submodular}. That is $u^L_{ij}(\boldsymbol{1}_{X})=u_{ij}(X)\ \forall X\subseteq M^n$, where $\boldsymbol{1}_{X}$ is the indicator function of the set $X$, and for any convex function $g$ such that $g(\boldsymbol{1}_{X})=u_{ij}(X)\ \forall X$, we must have $g(x)\leq u^L_{ij}(x)$. Now let us define 
\begin{align}\nonumber
f(x)=\max_{i\neq j}u^L_{ij}(x),
\end{align}
and note that $f$ is also a piecewise linear convex function such that $f(\boldsymbol{1}_{X})=F(X)$ for any $X\subseteq M^n$. Therefore, a convex program relaxation for the optimization problem \eqref{eq:equivalent-discrete} is given by
\begin{align}\label{eq:opt-convex-relax}
f^*=\min\{f(x): x\in P\},  
\end{align}
whose optimal solution $x^*$ can be obtained efficiently using a standard subgradient method. Clearly, if $f^*>1$, by Proposition \ref{prop:normalized} we conclude that no EFX allocation exists because \eqref{eq:opt-convex-relax} is a relaxation of the problem \eqref{eq:equivalent-discrete}. Fortunately, it is not difficult to show that $f^*\leq 1 $, as shown in the following. 

\begin{proposition}
Given $x\in P$ and two arbitrary columns $i\neq j\in [n]$, let us sort the items' indices such that $x_{1j}\ge \cdots\ge x_{mj}$. The Lov\'asz extension of $u_{ij}$ is given by the  piecewise linear convex function:
\begin{align}\nonumber
u^L_{ij}(x)=\sum_{k=2}^m\Big(v_{ki}+\big(\min_{r\in [k-1]}v_{ri}-v_{ki}\big)^+\Big)x_{kj}-\sum_{k=1}^mv_{ki}x_{ki}, 
\end{align}
where $[k]=\{1,2,\ldots,k\}$ for an integer $k$. In particular, $f^*\leq 1-\frac{1}{n}\min_{i,r}v_{ri}$. 
\end{proposition}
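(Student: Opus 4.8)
The plan is to prove the two claims in turn: first establish the closed form for $u^L_{ij}$, and then, rather than minimizing $f$ exactly, bound $f^*$ by evaluating at one well-chosen feasible point.

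\textbf{Part 1 (closed form for $u^L_{ij}$).} The key observation is that for $i\neq j$ the function $u_{ij}(X)=f_i(X_j)-v_i(X_i)$ decomposes into a submodular part $f_i$ supported only on the $j$-th column of $X$ and a modular (linear) part $-v_i$ supported only on the $i$-th column, and these two coordinate blocks are disjoint precisely because $i\neq j$. Since the marginal increment of an element of one block is unaffected by which elements of the other block are already present, the telescoping sum in Definition~\ref{def:lovasz} decouples, and the Lovász extension is additive over the two supports: $u^L_{ij}(x)=f_i^L(x_{\cdot j})-v_i^L(x_{\cdot i})$. The modular part is immediate, as the marginal of each item equals $v_{ki}$ regardless of context, giving $v_i^L(x_{\cdot i})=\sum_{k=1}^m v_{ki}x_{ki}$. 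For the submodular part, having sorted $x_{1j}\ge\cdots\ge x_{mj}$, the sorting permutation is the identity and the prefix sets are $S^k=\{1,\dots,k\}$; I would then substitute into Definition~\ref{def:lovasz} the marginal increment already computed in the submodularity argument, namely $f_i(S\cup\{\ell\})-f_i(S)=v_{\ell i}+(\min_{k\in S}v_{ki}-v_{\ell i})^+$, taken with $S=[k-1]$ and $\ell=k$. This yields the coefficient $v_{ki}+(\min_{r\in[k-1]}v_{ri}-v_{ki})^+$ of $x_{kj}$ for each $k\ge 2$; the $k=1$ term drops out because $f_i(\{1\})=v_{1i}-v_{1i}=0=f_i(\emptyset)$, which is exactly why the sum begins at $k=2$. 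Assembling the two pieces reproduces the stated formula.

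\textbf{Part 2 (the bound $f^*\le 1-\tfrac1n\min_{i,r}v_{ri}$).} Here I would simply exhibit the uniform fractional allocation $x_{ki}=\tfrac1n$ for all $k,i$, which lies in $P$ since $\sum_i x_{ki}=1$ for every item $k$, and evaluate $f$ there. Every column then equals $\tfrac1n\boldsymbol{1}$, so I can bypass any tie-breaking worry by invoking positive homogeneity of the Lovász extension together with the fact that it agrees with the set function on the all-ones indicator: $f_i^L(\tfrac1n\boldsymbol{1})=\tfrac1n f_i(M)=\tfrac1n\big(1-\min_k v_{ki}\big)$, using the normalization $V_i=1$. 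Likewise each linear term satisfies $v_i^L(\tfrac1n\boldsymbol{1})=\tfrac1n V_i=\tfrac1n$, so on $P$ the off-diagonal terms sum to $\sum_{\ell\neq i}v_i^L(x_{\cdot\ell})=\tfrac{n-1}{n}$. Substituting into the (normalized) $u_{ij}^L(x)=f_i^L(x_{\cdot j})+\sum_{\ell\neq i}v_i^L(x_{\cdot\ell})$ of Proposition~\ref{prop:normalized}, each pair $i\neq j$ contributes $\tfrac1n(1-\min_k v_{ki})+\tfrac{n-1}{n}=1-\tfrac1n\min_k v_{ki}$. Taking the maximum over $i\neq j$ selects the agent whose smallest item value is smallest overall, giving $f(x)=1-\tfrac1n\min_{i,r}v_{ri}$, and since $f^*=\min_{x\in P}f(x)\le f(x)$, the bound follows.

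\textbf{Anticipated obstacle.} The computation is mostly bookkeeping, and I expect the two genuinely delicate points to be the following. First, the additive decoupling of the Lovász extension across the disjoint column blocks must be justified carefully, since this is the only place where $i\neq j$ is essential. Second, one must keep the two versions of $u_{ij}$ consistent: the closed form of Part~1 is written for $u_{ij}=f_i(X_j)-v_i(X_i)$, whereas $f^*$ is defined through the shifted functions $u_{ij}=f_i(X_j)+\sum_{\ell\neq i}v_i(X_\ell)$ of Proposition~\ref{prop:normalized}. On the polytope $P$ these two differ by exactly the constant $1$, because $\sum_\ell v_i^L(x_{\cdot\ell})=V_i=1$ there; tracking this $+1$ shift is what turns the raw value $-\tfrac1n\min_{i,r}v_{ri}$ obtained from the Part~1 formula into the claimed bound $1-\tfrac1n\min_{i,r}v_{ri}$, and overlooking it is the most likely source of error.
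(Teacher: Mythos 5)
Your proposal is correct and follows essentially the same route as the paper: the same column-wise decoupling $u_{ij}^L(x)=f_i^L(x_{\cdot j})-v_i^L(x_{\cdot i})$ with the marginal increments $v_{ki}+\big(\min_{r\in[k-1]}v_{ri}-v_{ki}\big)^+$ substituted into the prefix-set definition, and the same uniform test point $\hat{x}$ with all entries $\frac{1}{n}$ for the bound, where your appeal to positive homogeneity, $f_i^L\big(\tfrac{1}{n}\boldsymbol{1}\big)=\tfrac{1}{n}f_i(M)=\tfrac{1}{n}\big(1-\min_k v_{ki}\big)$, is just a compact restatement of the telescoping of the $(\cdot)^+$ sum that the paper carries out explicitly. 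The $+1$ shift you flag between $f_i(X_j)-v_i(X_i)$ and the normalized $u_{ij}$ of Proposition~\ref{prop:normalized} (valid on $P$ since $\sum_{\ell}v_i^L(x_{\cdot\ell})=1$ there) is exactly the $+1$ appearing in the paper's evaluation of $u_{ij}^L(\hat{x})$, so your accounting matches the paper's in every essential.
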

\begin{proof}
Given the sorted vector of the $j$th column $x_{1j}\ge x_{2j}\ge\cdots\ge x_{mj}$, first we note that 
\begin{align}\nonumber
v_i^L(x_j)=\sum_{k=1}^m\big(v_i([k])-v_i([k-1])\big)x_{kj}=\sum_{k=1}^mv_{ki}x_{kj}.
\end{align}
Similarly, we have $v_i^L(x_i)=\sum_{k=1}^mv_{ki}x_{ki}.$ In order to find the Lov\'asz extension of $-\min_{k\in X_j}v_{ki}$, we have
\begin{align}\label{eq:min-telescope-to_plus}
\sum_{k=1}^m \Big(-\min_{r\in [k]}v_{ri}+\min_{r\in [k-1]}v_{ri}\Big)x_{kj}=-v_{1i}x_{1j}+\sum_{k=2}^m \big(\min_{r\in [k-1]}v_{ri}-v_{ki}\big)^+x_{kj}.
\end{align}
Summing all the above relations and noting that $u^L_{ij}(x)=f^L_i(x_j)-v^L_i(x)$ for any $x\in P$, we obtain the desired result. 

To show $f^*\leq 1$, let $\hat{x}$ be the fractional solution with all entries equal $\frac{1}{n}$, and note that $\hat{x}\in P$. Then
\begin{align}\nonumber
u^L_{ij}(\hat{x})&=\frac{1}{n}\sum_{k=2}^m\Big(v_{ki}+\big(\min_{r\in [k-1]}v_{ri}-v_{ki}\big)^+\Big)-\frac{1}{n}\sum_{k=1}^mv_{ki}+1\cr
&=-\frac{1}{n}v_{1i}+\frac{1}{n}\sum_{k=2}^m\big(\min_{r\in [k-1]}v_{ri}-v_{ki}\big)^++1\cr
&=\frac{1}{n}\sum_{k=1}^m \big(-\min_{r\in [k]}v_{ri}+\min_{r\in [k-1]}v_{ri}\big)+1\cr
&=1-\frac{1}{n}\min_{r\in [m]}v_{ri}<1,
\end{align} 
where the second equality holds by \eqref{eq:min-telescope-to_plus}, and the last equality is obtained using a telescopic sum. Thus, 
\begin{align}\nonumber
f^*\leq f(\hat{x})=\max_{i\neq j}u^L_{ij}(\hat{x})=\max_{i\neq j}\big\{1-\frac{1}{n}\min_{r\in [m]}v_{ri}\big\}=1-\frac{1}{n}\min_{i,r}v_{ri}\leq 1.
\end{align}  
\end{proof} 

Now suppose we solve the convex program \eqref{eq:opt-convex-relax} to obtain an optimal fractional solution $x^* \in [0,1]^{mn}$. One way to round this solution to an integral allocation $X^* = (X_1^*, \ldots, X_n^*)$ is as follows: for each column $i$, independently pick a uniformly random variable $\theta_i \sim \mathrm{Uniform}[0,1]$, and include item $k$ in $X_i^*$ if and only if $x_{ki} \ge \theta_i$. Then, by the definition of the Lov\'asz extension, we have $\mathbb{E}[u_{ij}(X^*)] = u_{ij}^{L}(x^*) \leq 1$. Unfortunately, the quantity we aim to upper bound is $F(X^*) = \max_{i \neq j} u_{ij}(X^*)$, whose expected value could be larger than $\max_{i \neq j} \mathbb{E}[u_{ij}(X^*)] \leq 1$. One might attempt to argue that if the random variables $u_{ij}(X^*)$, for $i \neq j$, are highly concentrated around their means, then we can approximate
\[
\mathbb{E}\big[\max_{i \neq j} u_{ij}(X^*)\big] \approx \max_{i \neq j} \mathbb{E}[u_{ij}(X^*)] \leq 1,
\]
which would imply the existence of an integral allocation $\hat{X}$ such that $\max_{i \neq j} u_{ij}(\hat{X}) \leq 1$. However, the above rounding procedure does not introduce sufficient independence to yield such strong concentration bounds. Even if this issue could be resolved, there remains the problem of feasibility: the resulting $X^*$ may not form a valid partition of $M$. Specifically, under this rounding scheme, an item might be allocated to multiple agents or to none at all. Thus, a form of contention resolution is required to transform the rounded solution $X^*$ into a feasible allocation $\hat{X}$, without significantly increasing the maximum expected utility. Therefore, while an optimal solution to the Lov\'asz extension relaxation may yield a good fractional solutions, it is unclear how to design a rounding scheme that offers provable guarantees. This motivates us to consider an alternative continuous extension of the EFX conditions \eqref{eq:condition-linear-efx} that avoids the limitations of the Lov\'asz extension approach discussed in this section.
        
\section{Independent Row-wise Continuous Extension}\label{sec:row}

In this section, we present an alternative continuous extension of the EFX problem, which enables us to establish theoretical results regarding the existence of an EFX allocation through nonlinear optimization and fixed-point theorems.  

Given an arbitrary (fractional) feasible solution $$x \in P= \Big\{ x \in [0,1]^{m \times n} : \sum_{i=1}^n x_{\ell i} = 1 \ \forall \ell \in [m] \Big\},$$ suppose that we round each row of \( x \) independently to a basis vector. That is, we let \( X \in P \) be an \emph{integral} allocation such that, independently for each row \( \ell\in [m] \), we sample one of its elements according to the probability distribution \( (x_{\ell 1}, \ldots, x_{\ell n}) \) and set it to 1. All other coordinates in that row are set to zero. Since the rounding is done independently across rows, it is easy to see that for each column \( X_i \) of the random binary matrix \( X = (X_1, \ldots, X_n) \), each entry \( X_{\ell i} \), for \( \ell \in [m] \), is set to 1 independently with probability \( x_{\ell i} \).

Next, we analyze the expectation $\mathbb{E}_{X \sim x} \left[ \max_{i \neq j} u_{ij}(X) \right]$ under the row-wise randomized rounding scheme described above.\footnote{Here, $\mathbb{E}_{X \sim x}[\cdot]$ denotes expectation with respect to a random binary matrix $X$, obtained by applying row-wise independent rounding to $x$.
}In particular, to find a solution \( x \) that minimizes this expected value, we first derive an upper bound for $\mathbb{E}_{X \sim x} \left[ \max_{i \neq j} u_{ij}(X) \right]$, and then minimize this upper bound over all feasible fractional allocations \( x \in P \). To derive such an upper bound, note that for any \( \lambda > 0 \), we can write:

\begin{align}\label{eq:simple-max-trick}
\mathbb{E}[\max_{i\neq j}u_{ij}(X)]&= \frac{1}{\lambda}\ln \left(e^{\lambda\mathbb{E}[\max_{i\neq j}u_{ij}(X)]}\right)\leq \frac{1}{\lambda}\ln \left(\mathbb{E}[e^{\lambda \max_{i\neq j} u_{ij}(X)}]\right)\cr
&=\frac{1}{\lambda}\ln \left(\mathbb{E}[ \max_{i\neq j} e^{\lambda u_{ij}(X)}]\right)\leq \frac{1}{\lambda}\ln \Big(\mathbb{E}[\sum_{i\neq j}e^{\lambda u_{ij}(X)}]\Big)\cr
&= \frac{1}{\lambda}\ln \Big(\sum_{i\neq j}\mathbb{E}[e^{\lambda u_{ij}(X)}]\Big),
\end{align} 
where the first inequality uses Jensen's inequality. Thus, we only need to upper-bound $\mathbb{E}[e^{\lambda u_{ij}(X)}]$ for an arbitrary pair of $i\neq j$, which is what we are going to do next. 

%&=\frac{2\ln(n)+\max_{i,j}\ln(\mathbb{E}[e^{\lambda u_{ij}(X)}])}{\lambda},

Using the the tower property of the conditional expectation, we have  
%Therefore, by the linearity of the expectation and the definition of the multilinear extension, we have $\mathbb{E}[u_{ij}(X)]=\mathbb{E}[f_i(X_j)]-\mathbb{E}[v_i(X_i)]=f^M_i(x^*_i)-v^M_i(x^*_i)=u^M_{ij}(x^*)\leq f^*<0$. Our goal is to find an upper bound for $\mathbb{E}[f(X)]=\mathbb{E}[\max_{i,j}u_{ij}(X)]$. Let $\lambda>0$ be an arbitrary scalar. Then,
\begin{align}\label{eq:tower}
\mathbb{E}[e^{\lambda u_{ij}(X)}]=\mathbb{E}\big[e^{\lambda f_i(X_j)}\mathbb{E}[e^{-\lambda v_i(X_i)}|X_j]\big].
\end{align}
Since the rows of \( x \) are rounded independently, \( X_{\ell i} \) only depends on \( X_{\ell j} \), and is independent of \( X_{\ell'i} \) and \( X_{\ell'j} \) for any \( \ell' \neq \ell \). Therefore, using the definition of conditional expectation, we can write:
\begin{align}\label{eq:product-condition-expectation}
\mathbb{E}[e^{-\lambda v_i(X_i)}|X_j]=\mathbb{E}[\prod_{\ell=1}^m e^{-\lambda v_{\ell i}X_{\ell i}}|X_j]=\prod_{\ell=1}^m\mathbb{E}[ e^{-\lambda v_{\ell i}X_{\ell i}}|X_{\ell j}].
\end{align}
Moreover, since exactly one entry in each row is set to 1, we can compute each term $\mathbb{E}[ e^{-\lambda v_{\ell i}X_{\ell i}}|X_{\ell j}]$ in the above product as
\begin{align}\label{eq:exp-two}
\mathbb{E}[ e^{-\lambda v_{\ell i}X_{\ell i}}|X_{\ell j}=1]&=1,\cr
\mathbb{E}[ e^{-\lambda v_{\ell i}X_{\ell i}}|X_{\ell j}=0]&=1\times \mathbb{P}\{X_{\ell i}=0|X_{\ell j}=0\}+e^{-\lambda v_{\ell i}}\times \mathbb{P}\{X_{\ell i}=1|X_{\ell j}=0\}\cr
&=1\times \frac{1-x_{\ell i}-x_{\ell j}}{1-x_{\ell j}}+e^{-\lambda v_{\ell i}}\times \frac{x_{\ell i}}{1-x_{\ell j}}\cr
&=1-(1-e^{-\lambda v_{\ell i}})\frac{x_{\ell i}}{1-x_{\ell j}}.
\end{align}
Let us define
\[
w_{ij\ell} := \left(1 - e^{-\lambda v_{\ell i}} \right) \frac{x_{\ell i}}{1 - x_{\ell j}},
\]
and note that for any \( \lambda > 0 \), we have \( w_{ij\ell} \in [0, 1] \). Thus, we can express both relations in~\eqref{eq:exp-two} using a single equation as follows:
\begin{align}\nonumber
\mathbb{E}[ e^{-\lambda v_{\ell i}X_{\ell i}}|X_{\ell j}]=1-w_{ij\ell}(1-X_{\ell j}).
\end{align}
Therefore, using \eqref{eq:product-condition-expectation}, we have
\begin{align}\nonumber
\mathbb{E}[e^{-\lambda v_{i}(X_i)}|X_j]=\prod_{\ell=1}^m\Big(1- w_{ij\ell}(1-X_{\ell j})\Big).
\end{align} 
Substituting this relation into \eqref{eq:tower}, we obtain
\begin{align}\nonumber
\mathbb{E}[e^{\lambda u_{ij}(X)}]= \mathbb{E}\Big[e^{\lambda f_i(X_j)}\prod_{\ell=1}^m\Big(1- w_{ij\ell}(1-X_{\ell j})\Big)\Big].
\end{align}
Next, we proceed to upper-bound the term $e^{\lambda f_i(X_j)}$. Using the definition of $f_i(X_j)$ from \eqref{eq:fi-def}, we have
\begin{align}\nonumber
e^{\lambda f_i(X_j)}&=e^{\lambda \sum_{\ell} v_{\ell i}X_{\ell j}}\times e^{-\lambda \min_{k\in X_j}v_{ki}}\cr
&=e^{\lambda \sum_{\ell} v_{\ell i}X_{\ell j}}\times \max_{k\in X_j}e^{-\lambda v_{ki}}\cr
&\leq e^{\lambda \sum_{\ell} v_{\ell i}X_{\ell j}}\times \sum_{k}X_{kj}e^{-\lambda v_{ki}}\cr
&=\sum_{k}X_{kj}e^{\big(\lambda \sum_{\ell} v_{\ell i}X_{\ell j}-\lambda v_{ki}\big)}.
\end{align}
Substituting this relation into the former and using the linearity of expectation, we obtain
\begin{align}\label{e-l-u}
\mathbb{E}[e^{\lambda u_{ij}(X)}]&\leq \mathbb{E}\Big[\sum_{k}X_{kj}e^{\big(\lambda \sum_{\ell} v_{\ell i}X_{\ell j}-\lambda v_{ki}\big)}\prod_{\ell=1}^m\Big(1- w_{ij\ell}(1-X_{\ell j})\Big)\Big]\cr
&=\sum_{k}\mathbb{E}\Big[X_{kj}e^{-\lambda v_{ki}}\prod_{\ell=1}^m e^{\lambda v_{\ell i}X_{\ell j}}\Big(1- w_{ij\ell}(1-X_{\ell j})\Big)\Big].
\end{align}
Since $X_{kj}$ is independent of $X_{k'j}$ for any $k\neq k'$, the expectation in \eqref{e-l-u} can be computed as follows:
\begin{align}\label{eq:residual}
\mathbb{E}\Big[X_{kj}e^{-\lambda v_{ki}}\prod_{\ell=1}^m e^{\lambda v_{\ell i}X_{\ell j}}\Big(1&- w_{ij\ell}(1-X_{\ell j})\Big)\Big]\cr
&=x_{kj}e^{-\lambda v_{ki}}\mathbb{E}\Big[\prod_{\ell=1}^m e^{\lambda v_{\ell i}X_{\ell j}}\Big(1- w_{ij\ell}(1-X_{\ell j})\Big)\Big|X_{kj}=1\Big]\cr 
&=x_{kj}\mathbb{E}\Big[\prod_{\ell\neq k}e^{\lambda v_{\ell i}X_{\ell j}}\Big(1- w_{ij\ell}(1-X_{\ell j})\Big)\Big|X_{kj}=1\Big]\cr
&=x_{kj}\prod_{\ell\neq k}\mathbb{E}\Big[e^{\lambda v_{\ell i}X_{\ell j}}\Big(1- w_{ij\ell}(1-X_{\ell j})\Big)\Big]\cr
&=x_{kj}\prod_{\ell\neq k}\Big((1-x_{\ell j})(1-w_{ij\ell})+x_{\ell j}e^{\lambda v_{\ell i}}\Big)\cr
&=x_{kj}\prod_{\ell\neq k}\Big(1-x_{\ell i}-x_{\ell j}+x_{\ell i}e^{-\lambda v_{\ell i}}+x_{\ell j}e^{\lambda v_{\ell i}}\Big),
\end{align}
where in the last equality we have used the definition of $w_{ij\ell}$. Substituting \eqref{eq:residual} into \eqref{e-l-u}, we get
\begin{align}\nonumber
\mathbb{E}[e^{\lambda u_{ij}(X)}]&\leq \sum_{k} x_{kj}\prod_{\ell\neq k}\Big(1-x_{\ell i}-x_{\ell j}+x_{\ell i}e^{-\lambda v_{\ell i}}+x_{\ell j}e^{\lambda v_{\ell i}}\Big).
\end{align}
Finally, by replacing this relation into \eqref{eq:simple-max-trick}, for any $\lambda>0$ and $x\in P$, we obtain
\begin{align}\label{eq:final-g}
\mathbb{E}_{X\sim x}[\max_{i\neq j}u_{ij}(X)]\leq \frac{1}{\lambda}\ln \left(\sum_{i\neq j,k}x_{kj}\prod_{\ell\neq k}\Big(1-x_{\ell i}-x_{\ell j}+x_{\ell i}e^{-\lambda v_{\ell i}}+x_{\ell j}e^{\lambda v_{\ell i}}\Big)\right),
\end{align}
which provides a closed-form  bound for the quantity of interest $\mathbb{E}_{X\sim x}[\max_{i\neq j}u_{ij}(X)]$. The following lemma presents an equivalent characterization of EFX allocations using the continuous-extension upper bound obtained in~\eqref{eq:final-g}.

\begin{lemma}\label{lemm:g}
Let $P= \left\{ x\in \mathbb{R}_+^{m n} : \sum_{i=1}^n x_{\ell i} = 1 \ \forall \ell \in [m] \right\}$ be the partition polytope, and consider the multivariate function $g:P\times (0,\infty)\to \mathbb{R}$ defined by
\begin{align}\nonumber
g(x,\lambda)=\frac{1}{\lambda}\ln \left(\sum_{i\neq j,k}x_{kj}\prod_{\ell\neq k}\Big(1-x_{\ell i}-x_{\ell j}+x_{\ell i}e^{-\lambda v_{\ell i}}+x_{\ell j}e^{\lambda v_{\ell i}}\Big)\right).
\end{align}  
Then, an EFX allocation exists if and only if $\inf\{g(\lambda,x):\lambda>0, x\in P\}\leq 0.$
\end{lemma}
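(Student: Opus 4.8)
The plan is to reduce the lemma to a statement about the discrete quantity
\[
v^\star := \min_{X \text{ a partition of } M}\ \max_{i\neq j} u_{ij}(X),
\]
since, by Lemma~\ref{lemm:u}, an EFX allocation exists precisely when $v^\star \le 0$ (the minimizing partition $X^\star$ then satisfies $u_{ij}(X^\star)\le 0$ for all $i\neq j$). Thus it suffices to prove $\inf\{g(x,\lambda): x\in P,\ \lambda>0\}\le 0 \iff v^\star\le 0$. The single tool driving both directions is the inequality~\eqref{eq:final-g}, which states $g(x,\lambda)\ge \mathbb{E}_{X\sim x}[\max_{i\neq j}u_{ij}(X)]$ for every $x\in P$ and $\lambda>0$, together with the structural fact that row-wise independent rounding always returns a genuine partition $X$ (each row is set to exactly one basis vector).

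For the sufficiency direction ($\inf g\le 0 \Rightarrow$ EFX exists) I would argue as follows. Because every realization $X\sim x$ is a partition, each realized value $\max_{i\neq j}u_{ij}(X)$ is at least $v^\star$, so $\mathbb{E}_{X\sim x}[\max_{i\neq j}u_{ij}(X)]\ge v^\star$. Combining this with~\eqref{eq:final-g} gives $g(x,\lambda)\ge v^\star$ for all $(x,\lambda)$, hence $\inf g\ge v^\star$. Therefore $\inf g\le 0$ forces $v^\star\le 0$, i.e.\ an EFX allocation exists by Lemma~\ref{lemm:u}. This direction is essentially immediate once the pointwise bound is in hand.

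For the necessity direction ($v^\star\le 0 \Rightarrow \inf g\le 0$) I would fix an EFX partition $X^\star$ and evaluate $g$ at the corresponding integral vertex $x^\star=\mathbf{1}_{X^\star}\in P$, where the rounding is deterministic. At such a point each factor $1-x^\star_{\ell i}-x^\star_{\ell j}+x^\star_{\ell i}e^{-\lambda v_{\ell i}}+x^\star_{\ell j}e^{\lambda v_{\ell i}}$ collapses to one of three values according to whether $\ell\in X^\star_i$ (giving $e^{-\lambda v_{\ell i}}$), $\ell\in X^\star_j$ (giving $e^{\lambda v_{\ell i}}$), or $\ell$ belongs to neither (giving $1$). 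Carrying out the product over $\ell\neq k$ and the sum over $k\in X^\star_j$, the argument of the logarithm factors as $\sum_{i\neq j} A_{ij}(\lambda)\,e^{\lambda u_{ij}(X^\star)}$, where $A_{ij}(\lambda):=\sum_{k\in X^\star_j} e^{\lambda(\min_{r\in X^\star_j}v_{ri}-v_{ki})}$ satisfies $1\le A_{ij}(\lambda)\le m$ uniformly in $\lambda$. Writing $\mu^\star:=\max_{i\neq j}u_{ij}(X^\star)\le 0$, the bounded prefactors yield $e^{\lambda\mu^\star}\le \sum_{i\neq j}A_{ij}(\lambda)e^{\lambda u_{ij}(X^\star)}\le m\,n(n-1)\,e^{\lambda\mu^\star}$, so that $\mu^\star\le g(x^\star,\lambda)\le \mu^\star+\tfrac{1}{\lambda}\ln\big(mn(n-1)\big)$. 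Letting $\lambda\to\infty$ gives $g(x^\star,\lambda)\to\mu^\star\le 0$, whence $\inf g\le 0$.

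I expect the necessity direction to be the main obstacle, and specifically the exact evaluation of the product at the integral vertex: one must verify that the successive relaxations used to derive~\eqref{eq:final-g} (Jensen's inequality, the soft-max bound replacing $\max_{i\neq j}$ by $\sum_{i\neq j}$, and $\max_{k\in X_j}e^{-\lambda v_{ki}}\le \sum_k X_{kj}e^{-\lambda v_{ki}}$) together reconstruct exactly $e^{\lambda u_{ij}(X^\star)}$ up to the harmless factor $A_{ij}(\lambda)\in[1,m]$. The crucial quantitative point is that $A_{ij}(\lambda)$ stays bounded in $[1,m]$ — the minimizer of $v_{ki}$ over $X^\star_j$ contributes the term $1$, giving the lower bound, while at most $m$ terms each $\le 1$ give the upper bound — so these prefactors are annihilated by the $\tfrac{1}{\lambda}\ln(\cdot)$ operation and only the exponential rate $\mu^\star$ survives in the limit. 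Everything else reduces to the routine pointwise inequality already established.
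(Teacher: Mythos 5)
Your proposal is correct and takes essentially the same route as the paper: both directions rest on the pointwise bound \eqref{eq:final-g}, and your exact evaluation at the integral vertex with the factorization $\sum_{i\neq j}A_{ij}(\lambda)e^{\lambda u_{ij}(X^\star)}$, $A_{ij}(\lambda)\in[1,m]$, is precisely the paper's computation \eqref{eq:lemms-second-part}, with the bounded prefactors killed by $\tfrac{1}{\lambda}\ln(\cdot)$ as $\lambda\to\infty$. Your sufficiency direction is a minor streamlining---bounding $g(x,\lambda)\ge v^\star$ pointwise because row-wise rounding always produces a partition, instead of the paper's $\delta\to 0$ argument over the finitely many integral allocations---but the underlying mechanism (the probabilistic method plus discreteness of the allocation space) is identical.
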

\begin{proof}
First, let us assume $\inf\{g(x,\lambda):\lambda>0, x\in P\}\leq 0$. As $g(x,\lambda)$ is a continuous function over a convex domain, for any $\delta>0$, there exist $x^*\in P, \lambda^*>0$, such that $g(x^*,\lambda^*)< \delta$. Since \eqref{eq:final-g} holds for any $x\in P$ and $\lambda>0$, we have 
\begin{align*}
\mathbb{E}_{X\sim x^*}[\max_{i\neq j}u_{ij}(X)]\leq g(x^*,\lambda^*)<\delta,
\end{align*}
which shows that there exists at least one integral allocation \( X^* \in P \), obtained via independent row-wise rounding of \( x^* \), such that $\max_{i \neq j} u_{ij}(X^*) < \delta$. Since the above argument holds for any \( \delta > 0 \), and there are only finitely many integral allocations (and hence, \( \max_{i \neq j} u_{ij}(X) \), for \( X \in P \cap \{0,1\}^{mn} \), can take only finitely many distinct values), it follows that for a sufficiently small \( \delta > 0 \), there exists at least one integral allocation \( X^* \in P \) such that $
\max_{i \neq j} u_{ij}(X^*) \leq 0.$ Therefore, \( X^* \) must be an EFX allocation.

Conversely, suppose that there exists an (integral) EFX allocation \( X^* \in P \), such that $\max_{i \neq j} u_{ij}(X^*) \leq 0$. By taking \( x = X^* \), a simple calculation shows that for any \( \lambda \in (0, \infty) \):
\begin{align}\label{eq:lemms-second-part}
g(X^*,\lambda)&=\frac{1}{\lambda}\ln\Bigg( \sum_{i\neq j}\sum_{k\in X^*_j}\exp\Big(\lambda\big(\sum_{\ell\in X^*_j\setminus \{k\}} v_{\ell i}-\sum_{\ell\in X^*_i}v_{\ell i}\big)\Big)\Bigg)\cr
&\leq \frac{1}{\lambda}\ln \Bigg(\sum_{i\neq j}|X^*_j|\exp\Big(\lambda u_{ij} (X^*)\Big)\Bigg)\cr
&\leq \frac{1}{\lambda}\ln \Bigg(nm \exp\Big(\lambda \max_{i\neq j}u_{ij} (X^*)\Big)\Bigg)\cr
&=\frac{\ln (nm)}{\lambda}+\max_{i\neq j}u_{i,j} (X^*)\cr
&\leq \frac{\ln (nm)}{\lambda},
\end{align}  
where the first equality holds because $x_{kj}=1$ implies that $k\in X^*_j$ and $k\notin X^*_i$. Moreover, the first inequality holds because $u_{ij}(X^*)=f_i(X^*_j)-v_i(X^*_i)$, and 
\begin{align*}
\sum_{\ell\in X^*_j\setminus \{k\}} v_{\ell i}\leq \sum_{\ell\in X^*_j} v_{\ell i}-\min_{\ell\in X^*_j}v_{\ell i}=f_i(X^*_j)\ \ \forall k\in X^*_j.  
\end{align*}
Therefore, by letting $\lambda\to \infty$, the right-hand side of the inequality \eqref{eq:lemms-second-part} goes to zero, which shows that $\inf\{g(x,\lambda): x\in P, \lambda>0\}\leq 0$. 
\end{proof}

\medskip
\begin{remark}
The advantage of Lemma \ref{lemm:g} is that (i) it reduces the problem of finding an EFX allocation to solving a continuous and smooth nonlinear optimization problem over a polyhedron, and (ii) it does not require an integrality constraint; instead, an integral solution is obtained from an optimal fractional one using row-wise independent randomized rounding.   
\end{remark}

\medskip
\begin{remark}\label{rem:large-lambda}
If $\inf\{g(x,\lambda) : \lambda > 0,\, x \in P\} \leq 0$, then for any EFX allocation $X^*$ (whose existence is guaranteed by Lemma~\ref{lemm:g}), we have $\limsup_{\lambda \to \infty} g(X^*, \lambda) \leq 0$. This follows directly from~\eqref{eq:lemms-second-part}, which shows that for any EFX allocation $X^*$ and any $\lambda > 0$, we have $g(X^*, \lambda) \leq \frac{\ln(nm)}{\lambda}$.
\end{remark}

\medskip
While Lemma~\ref{lemm:g} establishes a connection between EFX allocations and the optimal value of a nonlinear program, the objective function $g(x, \lambda)$ unfortunately lacks a succinct representation and involves $O(mn^{m+1})$ terms. Therefore, in the following theorem, we show that the same result holds for a much simpler nonlinear function with at most $O(nm)$ terms. This simplified function is obtained through a suitable change of variables in $g(x, \lambda)$ and is analyzed in the asymptotic regime as $\lambda \to \infty$.

\medskip
\begin{theorem}\label{thm:f-y}
Consider the continuous function $f:\mathbb{R}^{mn}\to \mathbb{R}$, which is defined by
\begin{align}\nonumber 
f(y)=\max_{i\neq j, k}\Big(y_{kj}+\sum_{\ell\neq k}\max\Big\{y_{\ell i}-v_{\ell i}, y_{\ell j}+v_{\ell i}, \max_{r\neq i,j }y_{\ell r}\Big\}\Big)-\sum_{\ell}\max_{r} y_{\ell r}. 
\end{align}
Then, an EFX allocation exists if and only if $\inf_{y\in \mathbb{R}^{mn}} f(y)\leq 0$.
\end{theorem}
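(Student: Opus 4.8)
The plan is to prove the theorem by relating the function $f$ to the function $g$ of Lemma~\ref{lemm:g} through an explicit change of variables, and then invoking that lemma. Concretely, for a parameter $\lambda>0$ I would reparameterize the partition polytope by a row-wise softmax map: given $y\in\mathbb{R}^{mn}$, set
\[
x_{\ell r}(\lambda,y)=\frac{e^{\lambda y_{\ell r}}}{\sum_{s=1}^n e^{\lambda y_{\ell s}}}, \qquad \ell\in[m],\ r\in[n].
\]
This map sends every $y$ into $P$, since each row is a probability vector, and it is surjective onto the relative interior of $P$. The key computation is to substitute this into $g(x,\lambda)$ and show that $\lim_{\lambda\to\infty} g(x(\lambda,y),\lambda)=f(y)$ for every fixed $y$. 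Given this limiting identity, the theorem follows from Lemma~\ref{lemm:g} once both directions are handled.

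For the limit, I would first use the partition identity $1-x_{\ell i}-x_{\ell j}=\sum_{r\neq i,j}x_{\ell r}$ to rewrite each factor of the product in $g$ as $\sum_{r\neq i,j}x_{\ell r}+x_{\ell i}e^{-\lambda v_{\ell i}}+x_{\ell j}e^{\lambda v_{\ell i}}$. After substituting the softmax expressions, every such factor becomes $\tfrac{1}{S_\ell}\big(\sum_{r\neq i,j}e^{\lambda y_{\ell r}}+e^{\lambda(y_{\ell i}-v_{\ell i})}+e^{\lambda(y_{\ell j}+v_{\ell i})}\big)$ with $S_\ell=\sum_s e^{\lambda y_{\ell s}}$, while the leading weight $x_{kj}$ becomes $e^{\lambda y_{kj}}/S_k$. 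Collecting the normalizers, the argument of the logarithm in $g$ equals $\big(\prod_\ell S_\ell\big)^{-1}$ times a finite sum of terms $e^{\lambda a_p}$, where each exponent $a_p$ has the form $y_{kj}+\sum_{\ell\neq k}(\text{one of } y_{\ell r},\,y_{\ell i}-v_{\ell i},\,y_{\ell j}+v_{\ell i})$ and does not depend on $\lambda$. Applying the log-sum-exp asymptotics $\tfrac1\lambda\ln\sum_p e^{\lambda a_p}\to\max_p a_p$ and $\tfrac1\lambda\ln S_\ell\to \max_r y_{\ell r}$, I obtain $g(x(\lambda,y),\lambda)\to \max_p a_p-\sum_\ell \max_r y_{\ell r}$, and maximizing the inner choices shows $\max_p a_p=\max_{i\neq j,k}\big(y_{kj}+\sum_{\ell\neq k}\max\{y_{\ell i}-v_{\ell i},\,y_{\ell j}+v_{\ell i},\,\max_{r\neq i,j}y_{\ell r}\}\big)$, i.e.\ exactly the first term of $f$. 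This establishes the limit, and hence the direction $\inf_y f(y)\le0\Rightarrow$ EFX: picking $y^*$ with $f(y^*)<\delta$ yields $g(x(\lambda,y^*),\lambda)<\delta$ for all large $\lambda$, so $\inf\{g(x,\lambda):x\in P,\lambda>0\}\le0$, and Lemma~\ref{lemm:g} returns an EFX allocation.

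For the converse I would argue directly rather than through a finite-$\lambda$ bound, since the subtracted convex term $\sum_\ell\max_r y_{\ell r}$ prevents a clean one-sided inequality at finite $\lambda$. Given an EFX allocation $X^*$ with allocation map $\sigma$ (item $\ell$ goes to agent $\sigma(\ell)$), I would set $y_{\ell\sigma(\ell)}=C$ and $y_{\ell r}=0$ otherwise, for a large constant $C$. A direct evaluation shows that once $C\ge 2\max_{\ell,i}v_{\ell i}$, each inner maximum in $f$ equals $C-v_{\ell i}$, $C+v_{\ell i}$, or $C$ according to whether $\sigma(\ell)=i$, $\sigma(\ell)=j$, or $\sigma(\ell)\notin\{i,j\}$, while $\sum_\ell\max_r y_{\ell r}=mC$; moreover, for $C$ large the outer maximum is attained only at triples with $k\in X^*_j$ (the remaining triples are driven to $-\infty$ by the uncanceled $-C$). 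Collecting terms, the $C$'s cancel and the surviving value is $\max_{i\neq j}\max_{k\in X^*_j}\big(v_i(X^*_j)-v_{ki}-v_i(X^*_i)\big)=\max_{i\neq j}u_{ij}(X^*)$, which is $\le0$ by Lemma~\ref{lemm:u}. Hence $\inf_y f(y)\le0$. The main obstacle in the whole argument is the careful bookkeeping of the limit in the first direction, namely verifying that expanding the product of sums yields a $\lambda$-free family of exponents so that the log-sum-exp limit is exactly $f$, together with checking in the converse that a single choice of $C$ simultaneously makes every inner maximum exact and suppresses the spurious $k\notin X^*_j$ triples.
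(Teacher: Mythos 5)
Your proposal is correct, and its first half---the softmax change of variables $x_{\ell r}=e^{\lambda y_{\ell r}}/\sum_{s}e^{\lambda y_{\ell s}}$, the log-sum-exp limit $\lim_{\lambda\to\infty}g(x(\lambda,y),\lambda)=f(y)$ obtained by pulling the normalizers $Z_\ell$ out of the product, and the deduction of the direction $\inf_y f(y)\le 0\Rightarrow$ EFX via Lemma~\ref{lemm:g}---coincides with the paper's proof. Where you genuinely diverge is the direction EFX~$\Rightarrow\inf_y f(y)\le 0$. The paper stays inside the $g$-world at a large but finite $\lambda^*$: it takes $y^*_{\ell i}\in\{0,-M\}$ encoding the EFX allocation, compares $g(\tilde x,\lambda^*)$ with $g(x^*,\lambda^*)$ through the auxiliary function $Q$, perturbation bounds of order $m^2n^3e^{-\lambda^*(M-2V)}$, and the Mean Value Theorem, and concludes only that $f(y^*)<3\delta$ for every $\delta>0$. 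You instead evaluate $f$ directly at the point $y_{\ell\sigma(\ell)}=C$, $y_{\ell r}=0$ otherwise---which, since $f$ is invariant under adding a constant to each row, is the same witness as the paper's $y^*$ up to a row shift with $C=M$. For $C\ge 2\max_{\ell,i}v_{\ell i}$ the inner maxima are exactly $C-v_{\ell i}$, $C+v_{\ell i}$, or $C$ according to $\sigma(\ell)=i$, $\sigma(\ell)=j$, or $\sigma(\ell)\notin\{i,j\}$; triples with $k\in X^*_j$ yield exactly $v_i(X^*_j)-v_{ki}-v_i(X^*_i)$ after the $C$'s cancel against $\sum_\ell\max_r y_{\ell r}=mC$, and maximizing over $k\in X^*_j$ recovers $u_{ij}(X^*)$ exactly, which is $\le 0$ by Lemma~\ref{lemm:u}. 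This gives $f(y)\le 0$ \emph{exactly} at an explicit point, not merely $\inf_y f(y)<3\delta$, and it bypasses the paper's most technical estimates entirely; in effect you show that the paper's own witness $y^*$ already satisfies $f(y^*)\le 0$ by direct computation, which is a cleaner and slightly stronger argument. Two small points of care: the suppression of triples with $k\notin X^*_j$ requires $C\ge V$ in addition to $C\ge 2\max_{\ell,i}v_{\ell i}$ (their value is bounded by $V-C$), and the outer maximum need not literally be attained only at triples with $k\in X^*_j$ (if all $u_{ij}(X^*)$ are very negative, a suppressed triple can dominate)---but since both families of triples evaluate to nonpositive numbers, the conclusion $f(y)\le 0$ is unaffected.
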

\begin{proof}
Using Lemma~\ref{lemm:g}, it suffices to show that $\inf\{g(x, \lambda) : \lambda > 0,\ x \in P\} \leq 0$ if and only if $\inf_{y \in \mathbb{R}^{mn}} f(y) \leq 0$. Given any $(x, \lambda) \in \big(P \cap (0,1)^{mn}\big) \times (0, \infty)$, one can find a $y \in \mathbb{R}^{mn}$ such that
\footnote{It is enough to take $y_{\ell i} = \frac{\ln x_{\ell i}}{\lambda}$, and this choice is unique up to a constant shift in each row.}
\begin{align}\label{eq:x-y-exp}
x_{\ell i}=\frac{e^{\lambda y_{\ell i}}}{\sum_{j\in N}e^{\lambda y_{\ell j}}}\ \  \forall \ell \in M, i\in N.
\end{align} 
Let us denote the denominator in \eqref{eq:x-y-exp} by $Z_\ell=\sum_{j\in N}e^{\lambda y_{\ell j}}.$ Using this change of variable, we have
\begin{align}\label{eq:g-x-y}
g(x,\lambda)&=\frac{1}{\lambda}\ln \left(\sum_{i\neq j,k}x_{kj}\prod_{\ell\neq k}\Big(1-x_{\ell i}-x_{\ell j}+x_{\ell i}e^{-\lambda v_{\ell i}}+x_{\ell j}e^{\lambda v_{\ell i}}\Big)\right)\cr
&=\frac{1}{\lambda}\ln \left(\sum_{i\neq j,k}\frac{e^{\lambda y_{kj}}}{Z_{k}}\prod_{\ell\neq k}\Big(1-\frac{e^{\lambda y_{\ell i}}}{Z_{\ell}}-\frac{e^{\lambda x_{\ell j}}}{Z_{\ell}}+\frac{e^{\lambda(y_{\ell i}-v_{\ell i})}}{Z_{\ell}}+\frac{e^{\lambda (y_{\ell j}+v_{\ell i})}}{Z_{\ell}}\Big)\right)\cr
&=\frac{1}{\lambda}\ln \left(\sum_{i\neq j,k}\frac{e^{\lambda y_{kj}}}{\prod_{\ell} Z_{\ell}}\prod_{\ell\neq k}\Big(Z_{\ell}-e^{\lambda y_{\ell i}}-e^{\lambda x_{\ell j}}+e^{\lambda(y_{\ell i}-v_{\ell i})}+e^{\lambda (y_{\ell j}+v_{\ell i})}\Big)\right)\cr
&=-\frac{1}{\lambda}\sum_{\ell}\ln Z_{\ell}+\frac{1}{\lambda}\ln \left(\sum_{i\neq j,k}e^{\lambda y_{kj}}\prod_{\ell\neq k}\Big(\sum_{r\neq i,j}e^{\lambda y_{\ell r}}+e^{\lambda(y_{\ell i}-v_{\ell i})}+e^{\lambda (y_{\ell j}+v_{\ell i})}\Big)\right).
\end{align}
Next, we analyze the limit of expression~\eqref{eq:g-x-y} as $\lambda \to \infty$, and show that the limit always exists and equals $f(y)$.

Since $g(x, \lambda)$ is differentiable with respect to $\lambda$ over $(0, \infty)$, we can apply Hôpital's rule. The limit of the first term in~\eqref{eq:g-x-y} as $\lambda \to \infty$ is given by 
\begin{align}\label{eq:first-term-lim}
\lim_{\lambda\to \infty} -\frac{1}{\lambda}\sum_{\ell}\ln Z_{\ell}&=-\sum_{\ell} \lim_{\lambda\to \infty} \frac{1}{\lambda}\ln \Big(\sum_{j}e^{\lambda y_{\ell j}}\Big)\cr
&=-\sum_{\ell}\lim_{\lambda\to \infty}\sum_{j} y_{\ell j}\frac{e^{\lambda y_{\ell j}}}{\sum_{r}e^{\lambda y_{\ell r}}}\cr
&=-\sum_{\ell}\max_{j}y_{\ell j}. 
\end{align}

To compute the limit of the second term in \eqref{eq:g-x-y}, we note that by expanding the product, the argument inside the logarithm can be written as a sum of $mn^{m+1}$ exponential terms of the form $\sum_{d=1}^{mn^{m+1}} e^{\lambda a_d}$. Therefore, similar to the first term, the limit of $\frac{1}{\lambda} \ln\left(\sum_{d} e^{\lambda a_d}\right)$ as $\lambda \to \infty$ is given by $\max_d a_d$. Unfortunately, this does not yield a succinct expression, as the maximum is taken over $mn^{m+1}$ different values. However, since we know that the limit exists and is equal to $\max_d a_d$, we can provide an alternative and more succinct expression for $\max_d a_d$. Observe that each term inside the product $\prod_{\ell \neq k} \left( \sum_{r \neq i,j} e^{\lambda y_{\ell r}} + e^{\lambda(y_{\ell i} - v_{\ell i})} + e^{\lambda(y_{\ell j} + v_{\ell i})} \right)$ involves disjoint row variables $y_{\ell}=(y_{\ell 1},\ldots,y_{\ell n})$, and the maximum exponent resulting from expanding the product is obtained by summing the maximum exponents from each individual term, that is, $\sum_{\ell \neq k} \max \left\{ y_{\ell i} - v_{\ell i},\ y_{\ell j} + v_{\ell i},\ \max_{r \neq i,j} y_{\ell r} \right\}$. Finally, taking into account the last exponential term $e^{\lambda y_{k j}}$ and maximizing over all indices $i \neq j$ and $k$ (due to the outer summation), we conclude that the limit of the second term in \eqref{eq:g-x-y} as $\lambda \to \infty$ equals
\begin{align*}
\max_{d}a_d=\max_{i\neq j, k}\Big(y_{kj}+\sum_{\ell\neq k}\max\Big\{y_{\ell i}-v_{\ell i}, y_{\ell j}+v_{\ell i}, \max_{r\neq i,j}y_{\ell r}\Big\}\Big).
\end{align*}
This relation together with \eqref{eq:first-term-lim} and \eqref{eq:g-x-y} shows that $\lim_{\lambda\to \infty}g(x,\lambda)=f(y)$.         

Now, let us assume that $\inf\{ g(x, \lambda) : x \in P,\, \lambda > 0 \} \leq 0$, which, according to Lemma~\ref{lemm:g}, implies that an EFX allocation $x^*$ exists. Let $M = 2V + 1$, where $V = \sum_{\ell, i} v_{\ell i}$, and define $y^*$ as
\begin{align}\nonumber
y^*_{\ell i}=\begin{cases}
0 &\mbox{if} \ x^*_{\ell i}=1,\\
-M &\mbox{if} \ x^*_{\ell i}=0.
\end{cases}
\end{align} 
Moreover, let $x$ be defined using \eqref{eq:x-y-exp} for the pair $(y^*, \lambda)$. Since we have shown that $\lim_{\lambda \to \infty} g(x, \lambda) = f(y^*)$, for any $\delta > 0$, there exists $\lambda_\delta > 0$ such that $|f(y^*) - g(x, \lambda)| < \delta$ for all $\lambda > \lambda_\delta$. Furthermore, using Remark~\ref{rem:large-lambda}, there exists an arbitrarily large $\lambda^* > \max\left\{ \lambda_\delta, \ln \frac{1}{\delta}, 2m^2n^3 \right\}$ such that $g(x^*, \lambda^*) < \delta$. Thus, if we let $\tilde{x}$ be defined using \eqref{eq:x-y-exp} for the pair $(y^*, \lambda^*)$ (i.e., $\tilde{x}$ is the same as $x$ for the specific choice $\lambda = \lambda^*$), we can write
\begin{align}\label{eq:x-tilde x}
f(y^*)-g(x^*,\lambda^*)&= \big(f(y^*)-g(\tilde{x},\lambda^*)\big)+\big(g(\tilde{x},\lambda^*)-g(x^*,\lambda^*)\big)\cr
&< \delta+\big(g(\tilde{x},\lambda^*)-g(x^*,\lambda^*)\big). 
\end{align}  
We claim that the second term in \eqref{eq:x-tilde x} is at most $\delta.$ To that end, let us define
\begin{align}\label{eq:epsilon}
\epsilon&=\max_{i\neq j, \ell}\Big|\big(1-\tilde{x}_{\ell i}-\tilde{x}_{\ell j}+\tilde{x}_{\ell i}e^{-\lambda^* v_{\ell i}}+\tilde{x}_{\ell j}e^{\lambda^* v_{\ell i}}\big)-\big(1-x^*_{\ell i}-x^*_{\ell j}+x^*_{\ell i}e^{-\lambda^* v_{\ell i}}+x^*_{\ell j}e^{\lambda^* v_{\ell i}}\big)\Big|\cr
&\qquad\leq 4\max_{\ell, i}|\tilde{x}_{\ell i}-x^*_{\ell i}|\cdot \max_{\ell, i}e^{\lambda^*v_{\ell i}}\cr
&\qquad< 4ne^{-\lambda^* M}\cdot e^{\lambda^*\max_{\ell, i} v_{\ell i}}\cr
&\qquad\leq 4ne^{-\lambda^* (M-V)},
\end{align} 
where the second inequality holds because, using \eqref{eq:x-y-exp} and the definition of $y^*$, for any $\ell$ and $i$, we have
\begin{align*}
|\tilde{x}_{\ell i}-x^*_{\ell i}|&\leq \max\Big\{\Big|\frac{e^{-\lambda^* M}}{\sum_j e^{\lambda^* y^*_{\ell j}}}-0\Big|, \Big|\frac{1}{\sum_j e^{\lambda^* y^*_{\ell j}}}-1\Big|\Big\}\cr
&= \max\Big\{\frac{e^{-\lambda^* M}}{1+(n-1)e^{-\lambda^* M}},1-\frac{1}{1+(n-1)e^{-\lambda^* M}}\Big\}\cr
&< ne^{-\lambda^* M}. 
\end{align*}
Now, let us define $$Q(z)=\sum_{i\neq j,k}z_{kj}\prod_{\ell\neq k}\Big(1-z_{\ell i}-z_{\ell j}+z_{\ell i}e^{-\lambda^* v_{\ell i}}+z_{\ell j}e^{\lambda^* v_{\ell i}}\Big).$$ 
Then, we can write
\begin{align}\nonumber
Q(\tilde{x})&-Q(x^*)\leq \sum_{\substack{i\neq j,k\\ x^*_{kj}=0}}\tilde{x}_{kj}\prod_{\ell\neq k}\Big(1-\tilde{x}_{\ell i}-\tilde{x}_{\ell j}+\tilde{x}_{\ell i}e^{-\lambda^* v_{\ell i}}+\tilde{x}_{\ell j}e^{\lambda^* v_{\ell i}}\Big)\cr
&+\sum_{\substack{i\neq j,k\\ x^*_{kj}=1}}\Bigg(\!\prod_{\ell\neq k}\Big(1-\tilde{x}_{\ell i}-\tilde{x}_{\ell j}+\tilde{x}_{\ell i}e^{-\lambda^* v_{\ell i}}+\tilde{x}_{\ell j}e^{\lambda^* v_{\ell i}}\Big)\!-\prod_{\ell\neq k}\Big(1-x^*_{\ell i}-x^*_{\ell j}+x^*_{\ell i}e^{-\lambda^* v_{\ell i}}+x^*_{\ell j}e^{\lambda^* v_{\ell i}}\Big)\!\Bigg)\cr
&\leq \sum_{\substack{i\neq j,k\\ x^*_{kj}=0}}\tilde{x}_{kj}\prod_{\ell\neq k}e^{\lambda^* v_{\ell i}}+\sum_{\substack{i\neq j,k\\ x^*_{kj}=1}}(m-2)\epsilon\prod_{\ell\neq k}(e^{\lambda^* v_{\ell i}}+\epsilon)\cr
&\leq mn^2 \exp\Big(-\lambda^*(M-\max_i\sum_{\ell\neq k}v_{\ell i})\Big)+m^2n^3\exp\Big(-\lambda^*(M-\max_i\sum_{\ell\neq k}v_{\ell i})\Big)\cr
&\leq 2m^2n^3e^{-\lambda^*(M-V)}.
\end{align}
In the above derivations, the first inequality follows because the product terms are nonnegative and $\tilde{x}_{kj}\leq 1$. The second inequality holds by noting that the $\ell$th term in the product is at most $e^{\lambda^* v_{\ell i}}$, and by using the definition of $\epsilon$ along with the Mean Value Theorem to upper-bound the difference of two products. Finally, the third inequality uses the fact that $\tilde{x}_{kj} \leq e^{-\lambda^* M}$ whenever $x^*_{\ell j} = 0$, and also uses the upper bound on $\epsilon$ from \eqref{eq:epsilon}. On the other hand, from \eqref{eq:lemms-second-part} we know that 
\begin{align*}
Q(x^*)=\sum_{\substack{i\neq j,k \\ x^*_{kj}=1}}\exp\Big(\lambda^*\big(\sum_{\ell\neq k, x^*_{\ell j}=1} v_{\ell i}-\sum_{\ell: x^*_{\ell i}=1}v_{\ell i}\big)\Big)\ge e^{-\lambda^*V}.
\end{align*}
Thus, we can write   
\begin{align}\nonumber
g(\tilde{x},\lambda^*)-g(x^*,\lambda^*)&=\frac{1}{\lambda^*}\ln \frac{Q(\tilde{x})}{Q(x^*)}\cr
&\leq \frac{1}{\lambda^*}\ln \Big(1+\frac{2m^2n^3e^{-\lambda^*(M-V)}}{e^{-\lambda^* V}}\Big)\cr
&\leq \frac{2m^2n^3}{\lambda^*}e^{-\lambda^*(M-2V)}\cr
&\leq e^{-\lambda^*}<\delta.
\end{align}
Using this relation in \eqref{eq:x-tilde x}, we have shown that for any $\delta > 0$, there exist $y^* \in \mathbb{R}^{mn}$ and a pair $(x^*, \lambda^*)$ such that $g(x^*, \lambda^*) \leq \delta$ and $f(y^*) - g(x^*, \lambda^*) < 2\delta$. Thus, for any $\delta > 0$, there exists $y^* \in \mathbb{R}^{mn}$ such that $f(y^*) < 3\delta$. This shows that $\inf_{y \in \mathbb{R}^{mn}} f(y) \leq 0$.

Conversely, if $\inf_{y \in \mathbb{R}^{mn}} f(y) \leq 0$, then for any $\delta > 0$, there exists $y^* \in \mathbb{R}^{mn}$ such that $f(y^*) < \delta$. Define $x^* \in P$ using \eqref{eq:x-y-exp} for the pair $(y^*, \lambda)$, where $\lambda > 0$ is free to choose. Since we have already shown that $\lim_{\lambda \to \infty} g(x^*, \lambda) = f(y^*)$, we get 
\begin{align*}
\inf_{x\in P,\lambda>0} g(x,\lambda)\leq \lim_{\lambda \to \infty} g(x^*,\lambda)=f(y^*)< \delta.
\end{align*} 
As this argument holds for any $\delta > 0$, we obtain $\inf_{x \in P,\, \lambda > 0} g(x, \lambda) \leq 0$, which completes the proof.   
\end{proof}

We note that Theorem \ref{thm:f-y} provides a necessary and sufficient condition for the existence of an EFX allocation in terms of the sign of the optimal value of an unconstrained optimization problem. In the next section, we leverage this objective function to establish an algorithmic result for obtaining an EFX allocation (if it exists).

\section{Difference of Convex Programming for Finding an EFX Allocation}\label{sec:DC}

As we showed in Theorem \ref{thm:f-y}, the existence of an EFX allocation is equivalent to checking whether $\inf_{y \in \mathbb{R}^{mn}} f(y) \leq 0$. In this section, we provide a framework based on \emph{difference-of-convex} (DC) optimization to address this problem. More specifically, to determine whether the unconstrained optimization $\inf_{y \in \mathbb{R}^{mn}} f(y)$ has a nonpositive optimal objective value, we observe that $f(y)$ can be written as $f(y) = \bar{h}(y) - h(y)$, where
\begin{align}\label{eq:two-convex}
&\bar{h}(y):=\max_{i\neq j, k}\Big(y_{kj}+\sum_{\ell\neq k}\max\Big\{y_{\ell i}-v_{\ell i}, y_{\ell j}+v_{\ell i}, \max_{r\neq i,j }y_{\ell r}\Big\}\Big),\cr
&h(y):= \sum_{\ell}\max_{r} y_{\ell r}.
\end{align}
In particular, both $\bar{h}(y)$ and $h(y)$ are continuous and convex functions of $y$. This allows us to use the DC optimization framework to study $\inf_{y \in \mathbb{R}^{mn}} \bar{h}(y) - h(y)$.

\begin{remark}
While determining whether the optimal value is nonpositive is an NP-complete problem for the difference of two general convex functions, in our setting, the convex functions take the special form given in \eqref{eq:two-convex}, which may simplify the analysis. This reduction, offers a new direction for studying the complexity of finding an EFX allocation—an avenue we leave for future research. 
\end{remark}

Next, we proceed to reformulate $\inf_{y \in \mathbb{R}^{mn}} \bar{h}(y) - h(y)$ as a concave minimization problem over a polytope. First, we note that the objective function $f(y)$ can be rewritten as
\begin{align*}
f(y)&=\bar{h}(y)-h(y)=\max_{i\neq j, k}\Big(y_{kj}+\sum_{\ell\neq k}\max\Big\{y_{\ell i}-v_{\ell i}, y_{\ell j}+v_{\ell i}, \max_{r\neq i,j }y_{\ell r}\Big\}\Big)-h(y).    
\end{align*} 
Let us introduce auxiliary variables $w$ and $z_{\ell i j }$, which aim to represent $\bar{h}(y)$ and $\max\{y_{\ell i}+v_{\ell i}, y_{\ell j}-v_{\ell i}, \max_{r\neq i,j} y_{\ell r}\}$, respectively. Then, the optimization problem $\inf_{y\in \mathbb{R}^{nm}}f(y)$ can be written as
\begin{align}\label{eq:DC-program}
\text{minimize} \quad & w-h(y) \cr
\text{subject to} \quad 
& y_{kj}+\sum_{\ell\neq k}z_{\ell i j}-w\le 0 \quad \forall k,i\neq j \cr
&y_{\ell i}-z_{\ell i j}\le -v_{\ell i} \quad \forall \ell,i\neq j \cr
&y_{\ell j}-z_{\ell i j}\le v_{\ell i} \quad \forall \ell,i\neq j \cr
&y_{\ell r}-z_{\ell i j}\le 0 \quad \forall \ell,i\neq j, r\neq i,j.
\end{align}
It is worth noting that, since the objective function $w - h(y) = w - \sum_{\ell} \max_r y_{\ell r}$ is a (piecewise linear) concave function, the optimal solution occurs at one of the extreme points of the feasible polytope.

To find a stationary point of such a problem, a widely used method is the DC Algorithm (DCA) \cite{lethi1997dca}, an iterative scheme that approximates the non-convex part of the objective by solving a sequence of convex subproblems. Specifically, starting from an initial point, at iteration $t$, the DCA computes a subgradient of $h$ at $y^{(t)}$, i.e., $v^{(t)} \in \partial h(y^{(t)})$, and solves the convex subproblem $$y^{(t+1)} = \arg\min_{y \in Y} \left\{ \bar{h}(y) - \langle v^{(t)}, y \rangle \right\},$$ where $Y$ is the feasible polytope. Adapting this algorithm to the program \eqref{eq:DC-program}, we obtain Algorithm 1.

\begin{algorithm}\label{alg:DCA}
\caption{Difference of Convex Program Algorithm (DCA)}
\begin{algorithmic}[1]
\State \textbf{Initialize:} Choose an initial point \( y^0 \in \mathbb{R}^{nm} \), an error tolerance $\delta\ge 0$, and set \( t = 0 \).
\While{not converged}
    \State Let $L^t(y)=\sum_{\ell}y_{\ell r^t_{\ell}}$, where $r^t_{\ell}=\argmax_{r}y^t_{\ell r}$.
    \State Solve the following LP, and let $y^{t+1}$ be the optimal solution to the $y$ variable: 
\begin{align*}
\text{minimize} \quad & w-L^t(y) \\
\text{subject to} \quad 
& y_{kj}+\sum_{\ell\neq k}z_{\ell i j}-w\le 0 \quad \forall k,i\neq j \cr
&y_{\ell i}-z_{\ell i j}\le -v_{\ell i} \quad \forall \ell,i\neq j \cr
&y_{\ell j}-z_{\ell i j}\le v_{\ell i} \quad \forall \ell,i\neq j \cr
&y_{\ell r}-z_{\ell i j}\le 0 \quad \forall \ell,i\neq j, r\neq i,j.
\end{align*}
    \State Check convergence: if \( \| y^{t+1} - y^{t} \|\leq \delta\), \textbf{stop}, and return $y^{t+1}$ and the optimal value to the LP.
    \State Update: \( t \gets t + 1 \)
\EndWhile
\end{algorithmic}
\end{algorithm}

As the tolerance parameter $\delta$ in Algorithm~1 approaches $0$, the iterates converge to a stationary point of the program~\eqref{eq:DC-program}. In particular, if the value returned by the algorithm is nonpositive, the resulting $y$ solution corresponds to an EFX allocation (through the change of variables in~\eqref{eq:x-y-exp} as $\lambda \to \infty$). Of course, solving a DC program is generally NP-hard, and this may also be the case for solving~\eqref{eq:DC-program}. Therefore, although there is no guarantee that a globally optimal solution will be achieved, this approach nonetheless provides a systematic method for searching for EFX allocations, if they exist.

\section{A Fixed Point Formulation for the Existence of an EFX Allocation}\label{sec:fixed}

In this section, we provide an alternative fixed-point reformulation for the existence of EFX allocations. More specifically, the optimization condition for the existence of an EFX allocation, as given in Theorem~\ref{thm:f-y}, can be rewritten in a vector form. Doing so yields an equivalent characterization of the existence of an EFX allocation as the fixed point of a continuous map. This is formalized in the following theorem. 
\begin{theorem}\label{thm:T-map}
Given $y\in \mathbb{R}^{mn}$ and agents' valuations $\{v_{\ell i}\}$, let \footnote{$e_{ij} = e_j - e_i$, where $e_i$ and $e_j$ denote the $i$-th and $j$-th standard basis vectors in $\mathbb{R}^n$, respectively.}  
\begin{align}\label{eq:A-h}
h(y_{\ell}) &= \max_{r} y_{\ell r} \ \forall \ell,\cr
A_{kj}(y) &= \max_{i: i \ne j} \sum_{\ell \ne k} \big( h(y_{\ell} + v_{\ell i} e_{ij}) - h(y_{\ell}) \big)\ \forall k,j.
\end{align}
Consider the vector map $T : \mathbb{R}^{mn} \to \mathbb{R}^{mn}$ whose $(k,j)$-th coordinate is defined by
\begin{align}\label{eq:T-coordinate-def}
T_{kj}(y) = \min\Big\{ y_{kj},\ h(y_k) - A_{kj}(y) \Big\},
\end{align}
Then, an EFX allocation exists if and only if $T$ has a fixed point.\footnote{$y^* \in \mathbb{R}^{mn}$ is a fixed point for $T$ if $T(y^*) = y^*$.}
\end{theorem}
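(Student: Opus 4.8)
The plan is to reduce the fixed-point statement to the optimization criterion of Theorem~\ref{thm:f-y} by first rewriting $f$ in terms of the very quantities $h$ and $A_{kj}$ that define $T$. The key algebraic observation is that the inner maximum inside $f(y)$ is exactly $h$ evaluated at a shifted row: since $e_{ij}=e_j-e_i$, the vector $y_\ell+v_{\ell i}e_{ij}$ is just $y_\ell$ with its $j$-th entry raised by $v_{\ell i}$ and its $i$-th entry lowered by $v_{\ell i}$, so that
\[
\max\big\{y_{\ell i}-v_{\ell i},\ y_{\ell j}+v_{\ell i},\ \max_{r\neq i,j} y_{\ell r}\big\}=h(y_\ell+v_{\ell i}e_{ij}).
\]
First I would substitute this into the definition of $f$; pulling the $i$-independent quantity $\sum_{\ell\neq k}h(y_\ell)$ in and out of the maximum over $i$ turns the inner sum into $A_{kj}(y)+\sum_{\ell\neq k}h(y_\ell)$, and the subtracted term $\sum_\ell h(y_\ell)$ then cancels everything except the $k$-th row. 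This bookkeeping yields the identity
\[
f(y)=\max_{k,j}\Big(y_{kj}+A_{kj}(y)-h(y_k)\Big).
\]

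With this identity in hand, I would prove the pointwise equivalence $T(y)=y\iff f(y)\le 0$, valid for every fixed $y$. Indeed, $T_{kj}(y)=\min\{y_{kj},\,h(y_k)-A_{kj}(y)\}$ equals $y_{kj}$ if and only if $y_{kj}\le h(y_k)-A_{kj}(y)$, i.e.\ $y_{kj}+A_{kj}(y)-h(y_k)\le 0$; hence $T(y)=y$ holds precisely when this inequality is satisfied simultaneously for all $(k,j)$, which by the identity above is exactly the condition $f(y)\le 0$. Consequently $T$ has a fixed point if and only if there exists $y$ with $f(y)\le 0$. The backward direction of the theorem is then immediate: a fixed point $y^*$ satisfies $f(y^*)\le 0$, hence $\inf_{y}f(y)\le 0$, and Theorem~\ref{thm:f-y} supplies an EFX allocation.

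The forward direction is the step I expect to require the most care, because Theorem~\ref{thm:f-y} only guarantees $\inf_y f(y)\le 0$, while a fixed point demands a point at which $f$ is \emph{actually} nonpositive. Rather than arguing that the infimum is attained, I would exhibit such a point directly from the EFX allocation. Given an EFX allocation $X^*$, set $y^*_{\ell i}=0$ if $\ell\in X^*_i$ and $y^*_{\ell i}=-M$ otherwise, with $M=2V+1$ and $V=\sum_{\ell,i}v_{\ell i}$, as in the proof of Theorem~\ref{thm:f-y}. Then $h(y^*_\ell)=0$ for every row, and a short case analysis on whether item $\ell$ is assigned to $j$, to $i$, or to a third agent shows that $h(y^*_\ell+v_{\ell i}e_{ij})$ equals $v_{\ell i}$, $-v_{\ell i}$, or $0$, respectively. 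Thus for any $(k,j)$ with $k\in X^*_j$ one obtains $y^*_{kj}+A_{kj}(y^*)-h(y^*_k)=\max_{i\neq j}\big(v_i(X^*_j\setminus\{k\})-v_i(X^*_i)\big)$, which is nonpositive by the EFX inequalities~\eqref{eq:condition-linear-efx}; while for $(k,j)$ with $k\notin X^*_j$ the leading term $y^*_{kj}=-M$ dominates the bounded quantity $A_{kj}(y^*)\le V$, keeping the expression strictly negative by the choice $M>V$. Taking the maximum over all $(k,j)$ gives $f(y^*)\le 0$, so by the equivalence above $y^*$ is a fixed point of $T$, completing the argument. The only delicate point throughout is precisely this gap between an infimum bound and an attained value, and it is resolved entirely by the explicit combinatorial construction of $y^*$.
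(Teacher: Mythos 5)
Your proof is correct, and its core --- the identity $f(y)=\max_{k,j}\big(y_{kj}-h(y_k)+A_{kj}(y)\big)$, obtained by recognizing the inner maximum in $f$ as $h(y_\ell+v_{\ell i}e_{ij})$, followed by the pointwise equivalence $T(y)=y \iff f(y)\le 0$ --- is exactly the paper's argument. Where you genuinely diverge is the forward direction. The paper passes from the criterion $\inf_y f(y)\le 0$ of Theorem~\ref{thm:f-y} to ``there exists $y$ with $f(y)\le 0$'' without comment, implicitly relying on an attainment fact that is only visible inside the \emph{proof} of Theorem~\ref{thm:f-y} (there the constructed $y^*$ is independent of $\delta$ and satisfies $f(y^*)<3\delta$ for every $\delta>0$, hence $f(y^*)\le 0$); the paper later records this as a standalone bullet point before Conjecture~\ref{conj-map}. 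You spotted this gap and close it self-containedly: you rebuild the combinatorial point $y^*_{\ell i}\in\{0,-M\}$ from the EFX allocation and verify $f(y^*)\le 0$ by a direct case analysis, showing $h(y^*_\ell+v_{\ell i}e_{ij})-h(y^*_\ell)\in\{v_{\ell i},\,-v_{\ell i},\,0\}$ according to whether $\ell\in X^*_j$, $\ell\in X^*_i$, or neither, so that the $(k,j)$ expression reduces to $\max_{i\neq j}\big(v_i(X^*_j\setminus\{k\})-v_i(X^*_i)\big)\le 0$ when $k\in X^*_j$ (by~\eqref{eq:condition-linear-efx}) and is at most $-M+V<0$ otherwise. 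This buys a purely combinatorial verification that bypasses the asymptotic $g(x,\lambda)$ machinery entirely and makes the theorem independent of the attainment subtlety that the \emph{statement} of Theorem~\ref{thm:f-y} (an infimum condition) does not literally provide. Your computations check out, including the implicit requirement $M\ge 2v_{\ell i}$ in the $\ell\in X^*_i$ case, which the choice $M=2V+1$ guarantees.
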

\begin{proof}
From Theorem \ref{thm:f-y}, we know that an EFX allocation exists if and only if  $\inf_{y\in \mathbb{R}^{mn}}f(y)\leq 0$, where
\begin{align*}
f(y)=\max_{i\neq j, k}\Big(y_{kj}+\sum_{\ell\neq k}\max\Big\{y_{\ell i}-v_{\ell i}, y_{\ell j}+v_{\ell i}, \max_{r\neq i,j }y_{\ell r}\Big\}\Big)-\sum_{\ell}\max_{r} y_{\ell r}. 
\end{align*}
Therefore, from \eqref{eq:A-h}, we have
\begin{align*}
h(y_{\ell}+v_{\ell i}e_{ij})=\max\Big\{y_{\ell i}-v_{\ell i}, y_{\ell j}+v_{\ell i}, \max_{r\neq i,j }y_{\ell r}\Big\}.
\end{align*}
Now, we can rewrite $f(y)$ in a slightly different form as
\begin{align}\nonumber 
f(y)&=\max_{i\neq j, k}\Big(y_{kj}+\sum_{\ell\neq k}h(y_{\ell}+v_{\ell i}e_{ij})\Big)-\sum_{\ell}h(y_{\ell})\cr
&=\max_{i\neq j, k}\Big(y_{kj}-h(y_k)+\sum_{\ell\neq k}\big(h(y_{\ell}+v_{\ell i}e_{ij})-h(y_{\ell})\big)\Big)\cr
&=\max_{j, k}\Big(y_{kj}-h(y_k)+\max_{i: i\neq j}\sum_{\ell\neq k}\big(h(y_{\ell}+v_{\ell i}e_{ij})-h(y_{\ell})\big)\Big)\cr
&=\max_{j, k}\Big(y_{kj}-h(y_k)+A_{kj}(y)\Big),
\end{align}
where the last equality follows by the definition of $A_{kj}(y)$. Therefore, an EFX allocation exists if and only if there exists $y \in \mathbb{R}^{mn}$ such that $y_{kj} - h(y_k) + A_{kj}(y) \leq 0\ \forall j,k$, or equivalently, if there exists $y \in \mathbb{R}^{mn}$ such that $
\min\{y_{kj}, h(y_k) - A_{kj}(y)\} = y_{kj}\ \forall j,k$. As a result, if we define the vector map $T:\mathbb{R}^{mn} \to \mathbb{R}^{mn}$ as in~\eqref{eq:T-coordinate-def}, then an EFX allocation exists if and only if $T$ has a fixed point in $\mathbb{R}^{mn}$.
\end{proof}
  
Currently, we do not know whether \( T \) admits a fixed point. However, in all numerical experiments we conducted, such a fixed point consistently appeared to exist. This observation motivated us to investigate whether the existence of a fixed point for \( T \) could be derived from known fixed-point theorems, such as the Brouwer's fixed-point theorem. To that end, we first list some properties of the mapping \( T \) that will be useful in studying the existence of its fixed points.

\begin{itemize}
    \item $T$ is a continuous map. This follows directly from the fact that each coordinate of $T$ is expressed using the (continuous) $\min$ and $\max$ functions.
    
    \item $|A_{kj}(y)| \leq V$ for all $y \in \mathbb{R}^{mn}$ and all $j, k$, where $V = \sum_{\ell,i} v_{\ell i}$. Indeed, for any $y \in \mathbb{R}^{mn}$,
    \begin{align}\label{eq:A-bound-V}
    |A_{kj}(y)| \leq \max_{i \ne j} \sum_{\ell \ne k} \left| h(y_{\ell} + v_{\ell i} e_{ij}) - h(y_{\ell}) \right|
    \leq \max_{i \ne j} \sum_{\ell \ne k} v_{\ell i}
    \leq V.
    \end{align}
    
    \item As shown in the proof of Theorem~\ref{thm:f-y}, if an EFX allocation exists, then $f(y) \leq 0$ for some $y \in [-M, 0]^{mn}$, where $M$ is a sufficiently large constant (e.g., $M = 2V + 1$).
\end{itemize}

According to the above properties, without loss of generality, we may restrict the domain of \( T \) to the compact and convex box \( [-M, 0]^{mn} \), for a sufficiently large finite \( M \), and search for the existence of a fixed point within that box. Therefore, we can say that the continuous mapping $T : [-M, 0]^{mn} \to [-M - V, 0]^{mn}$ has a fixed point if and only if an EFX allocation exists. In particular, by replacing \( T \) with its normalized version \( \frac{1}{M}T(My) \),\footnote{The normalized map can be viewed as the same map \( T \), except that agents' valuations \( v_{\ell i} \) are replaced by their normalized values \( \frac{v_{\ell i}}{M} \).} and noting that \( y^* \) is a fixed point of \( T \) if and only if \( y^* \) is a fixed point of its normalized version, we may assume that the domain of \( T \) is \( [-1, 0]^{mn} \), where the agents' valuations are replaced by their normalized values \( \frac{v_{\ell i}}{M} \). This brings us to the following conjecture:

\smallskip 
\begin{conjecture}\label{conj-map}
Let \( \epsilon > 0 \) be an arbitrarily small number, and without loss of generality, assume that \( V = \sum_{\ell,i} v_{\ell i} \leq \epsilon \).\footnote{Otherwise, we can replace each \( v_{\ell i} \) with \( \frac{v_{\ell i}}{M} \), where \( \frac{V}{M} \leq \epsilon \).} Let $T: [-1, 0]^{mn} \to [-1 + \epsilon, 0]^{mn}$ be a continuous map defined by
\begin{align*}
T_{kj}(y) = \min\left\{ y_{kj},\ h(y_k) - A_{kj}(y) \right\} \quad \forall\, k, j,
\end{align*}
where $h(y_{\ell})$ and $A_{kj}(y)$ are defined as in \eqref{eq:A-h}. Then \( T\) admits a fixed point.  
\end{conjecture}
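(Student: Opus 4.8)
The plan is to realize $T$ as a near self-map of the compact convex box $B=[-1,0]^{mn}$ and to extract a fixed point from Brouwer's theorem after accounting for the small overshoot on the lower boundary. I would first record the two one-sided estimates that control the range of $T$ on $B$. Since $T_{kj}(y)=\min\{y_{kj},\,h(y_k)-A_{kj}(y)\}\le y_{kj}\le 0$, the map never escapes $B$ from above. For the lower side, $h(y_k)=\max_r y_{kr}\ge y_{kj}\ge -1$ together with $|A_{kj}(y)|\le V\le \epsilon$ from \eqref{eq:A-bound-V} gives $h(y_k)-A_{kj}(y)\ge -1-\epsilon$, hence $T_{kj}(y)\ge -1-\epsilon$ for every $y\in B$. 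Thus $T$ maps $B$ into $[-1-\epsilon,0]^{mn}$, consistent with the earlier codomain $[-M-V,0]^{mn}$, and it overshoots $B$ by at most $\epsilon$ and only across the lower faces $\{y_{kj}=-1\}$.

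Next, let $\Pi:\mathbb{R}^{mn}\to B$ be the coordinate-wise clamp onto $[-1,0]$. Then $\Pi\circ T:B\to B$ is continuous (a composition of the continuous $T$ with the continuous $\Pi$), and $B$ is compact and convex, so Brouwer's fixed-point theorem yields $y^*\in B$ with $\Pi(T(y^*))=y^*$. Because $T_{kj}(y^*)\le 0$, the upper clamp is inert and $y^*_{kj}=\max\{-1,\,T_{kj}(y^*)\}$ for each $(k,j)$. If $T_{kj}(y^*)\ge -1$ holds for every coordinate, then $y^*_{kj}=T_{kj}(y^*)$, so $y^*$ is an honest fixed point of $T$, which by Theorem~\ref{thm:T-map} is exactly an EFX allocation; this would conclude the argument.

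The remaining task is to isolate and eliminate the single failure mode. Suppose some coordinate has $T_{kj}(y^*)<-1$; then the clamp is active, $y^*_{kj}=-1$, and the defining minimum is attained by the second argument, i.e. $h(y^*_k)-A_{kj}(y^*)<-1$. Combining this with $h(y^*_k)\ge -1$ and $A_{kj}(y^*)\le \epsilon$ forces $-1\le h(y^*_k)<\epsilon-1$ and $A_{kj}(y^*)>h(y^*_k)+1>0$. In other words, the entire row $k$ is pinned inside the thin slab $[-1,-1+\epsilon)$ while there is a strictly profitable transfer—the index $i\ne j$ achieving $A_{kj}(y^*)$—among the remaining rows. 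Ruling out this configuration is the \emph{crux}: one must show that whenever a row is forced onto the lower face, no competing column transfer can exceed $h(y^*_k)+1$.

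I expect this last step to be the genuine obstacle, and I would be candid that it cannot be dispatched by soft arguments alone. By Theorem~\ref{thm:T-map}, the assertion ``$T$ has a fixed point'' is logically equivalent to ``an EFX allocation exists,'' so a complete proof of the conjecture would resolve the EFX existence question for linear valuations, and the pinned-row case above is precisely the analytic trace of a hypothetical non-existent instance. What the Brouwer argument does deliver unconditionally is a \emph{proxy}: since $T(y^*)\in[-1-\epsilon,0]^{mn}$ and $y^*=\Pi(T(y^*))$, we have $\|T(y^*)-y^*\|_\infty\le \epsilon$, so $y^*$ is an $\epsilon$-approximate fixed point. The natural route toward the full statement is therefore to perturb $T$ into a strict self-map $T_\eta$ of $B$ (for instance by retracting the lower faces slightly inward, or via a homotopy/degree argument that tracks fixed points as $\eta\to 0$), obtain exact fixed points $y^*_\eta$ of $T_\eta$ for each $\eta>0$, and then use compactness of $B$ to pass to a convergent subsequence; the residual difficulty is to prevent the limit from stalling on the lower face, which reduces once more to the same row-pinning phenomenon.
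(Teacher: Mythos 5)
Your proposal is careful and honest, but it does not prove the statement---and it is worth being explicit that the statement is a \emph{conjecture} in the paper: the paper itself offers no proof, so there is no ``paper proof'' for your argument to match. What your clamp-and-Brouwer construction actually delivers is exactly what you concede at the end: a point $y^*$ with $\Pi(T(y^*))=y^*$ and hence $\|T(y^*)-y^*\|_\infty\le\epsilon$, i.e.\ an $\epsilon$-approximate fixed point. The genuine gap is the pinned-row case you isolate: when some coordinate has $T_{kj}(y^*)<-1$, the clamp is active, $y^*_{kj}=-1$, and nothing in the structure of $h$ and $A_{kj}$ rules this out by ``soft'' arguments. Indeed it \emph{cannot} be ruled out by such arguments: by Theorem~\ref{thm:T-map}, an exact fixed point of $T$ is equivalent to the existence of an EFX allocation, so closing your one remaining case would settle the open EFX existence problem for linear valuations. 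Note also that the $\epsilon$-approximation does not improve under the normalization: rescaling $v_{\ell i}$ to $v_{\ell i}/M$ shrinks $\epsilon=V/M$ in normalized coordinates, but unwinding the scaling returns an additive violation of order $V$ in the original coordinates, so sending $\epsilon\to 0$ along your perturbed maps $T_\eta$ merely reproduces a fixed point of $\Pi\circ T$, stalling on the lower face exactly as you predict. One small reading issue you handled correctly: the conjecture's stated codomain $[-1+\epsilon,0]^{mn}$ is evidently a sign typo for $[-1-\epsilon,0]^{mn}$, as confirmed by the surrounding text ($T:[-M,0]^{mn}\to[-M-V,0]^{mn}$) and by the estimate via \eqref{eq:A-bound-V} that you rederive.

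For comparison, the paper's own way around the conjecture (Section~\ref{sec:fixed}) is structurally parallel to yours but attacks the other Brouwer hypothesis: instead of clamping the continuous map $T$ (keeping continuity, sacrificing self-containment), Lemma~\ref{lemm:discrete-map} builds a discrete map $T'$ with $T'_{kj}(y)=\min\{y_{kj}-h(y_k),\,-A_{kj}(y)\cdot\boldsymbol{1}_{\{h(y_k)=0\}}\}$ that is self-contained but discontinuous, and then smooths the indicator to $e^{h(y_k)}$, yielding the continuous self-map $\tilde{T}_{kj}(y)=\min\{y_{kj}-h(y_k),\,-A_{kj}(y)e^{h(y_k)}\}$ on $[-M,0]^{mn}$, which has an exact Brouwer fixed point. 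But the paper then hits its own residual case: a fixed point of $\tilde{T}$ either satisfies the EFX system \eqref{eq:constraints-A} or obeys the stringent nonlinear equalities \eqref{eq:final} ($h(y_k)<0$, $y_{kj}<0$, $e^{-h(y_k)}=-A_{kj}(y)/y_{kj}$ for \emph{all} $j$ in row $k$), which the paper can only conjecture infeasible, supported by numerics. So your lower-face pinning and the paper's system \eqref{eq:final} are the same obstruction wearing different clothes; your write-up is a legitimate (and correctly self-diagnosed) partial result, but the conjecture itself remains open under both routes.
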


\smallskip
It is worth noting that the mapping \( T \) in the above conjecture satisfies all the conditions of the Brouwer’s fixed-point theorem, except for the self-mapping property, which is violated by an arbitrarily small amount \( \epsilon > 0 \). This suggests that, perhaps through an appropriate nonlinear transformation, one could slightly bend the mapping \( T \) near the boundary of its domain to satisfy the conditions of Brouwer’s fixed-point theorem—where the fixed point of the perturbed map would also serve as a fixed point for the original map \( T \). This is precisely the approach we will pursue in the following section.

\subsection{An Approximate Map with a Guaranteed Fixed Point}

Building on the framework developed above, we begin this section by re-expressing the EFX conditions in a slightly modified form, framed as a fixed point of an alternative discrete vector map. Although this map is discontinuous on the boundaries of its domain, we approximate it closely using a perturbed continuous map that satisfies Brouwer's fixed point theorem, and therefore always admits a fixed point. While we are currently unable to prove that every fixed point of the perturbed map is also a fixed point of the original discrete map, we provide strong evidence suggesting this is the case. To describe the structure of the approximate map, we fist consider the following lemma. 

\smallskip
\begin{lemma}\label{lemm:discrete-map}
Consider a vector map $T':[-M, 0]^{mn}\to [-M, 0]^{mn}$ whose $(k,j)$th coordinate is given by $T'_{kj}(y)=\min\{y_{kj}-h(y_k), -A_{kj}(y)\cdot\boldsymbol{1}_{\{h(y_k)=0\}}\}$, where $\boldsymbol{1}_{\{\cdot\}}$ is the indicator function, and $h(y_k), A_{kj}(y)$ are defines as in \eqref{thm:T-map}. Then, an EFX allocation exists if and only if $T'$ has a fixed point.  
\end{lemma}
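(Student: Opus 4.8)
The plan is to reduce everything to the scalar characterization already in hand, namely that an EFX allocation exists if and only if $\inf_{y}f(y)\le 0$ (Theorem~\ref{thm:f-y}), together with the compact rewriting $f(y)=\max_{j,k}\big(y_{kj}-h(y_k)+A_{kj}(y)\big)$ derived in the proof of Theorem~\ref{thm:T-map}. Concretely, I would show that the fixed points of $T'$ are \emph{exactly} those $y\in[-M,0]^{mn}$ satisfying $h(y_k)=0$ for every $k$ \emph{and} $f(y)\le 0$; once this is established, both directions of the lemma follow by invoking Theorem~\ref{thm:f-y}. Note first that $T'$ genuinely maps $[-M,0]^{mn}$ into itself: the first argument of the $\min$ satisfies $y_{kj}-h(y_k)\le 0$ since $y_{kj}\le\max_r y_{kr}=h(y_k)$, while the bound $|A_{kj}(y)|\le V\le M$ from \eqref{eq:A-bound-V} keeps both arguments in $[-M,0]$.

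For the direction ``fixed point $\Rightarrow$ EFX,'' the crux is the observation that the first component of the $\min$ forces $h(y_k)=0$ on every row at a fixed point. Indeed, on the domain $[-M,0]^{mn}$ we always have $y_{kj}-h(y_k)\le 0$, with equality precisely when $h(y_k)=0$. If some row had $h(y_k)<0$, then the indicator $\boldsymbol{1}_{\{h(y_k)=0\}}$ vanishes, so $T'_{kj}(y)=\min\{y_{kj}-h(y_k),0\}=y_{kj}-h(y_k)$, and the fixed-point equation $T'_{kj}(y)=y_{kj}$ would force $h(y_k)=0$, a contradiction. Hence $h(y_k)=0$ for all $k$, the indicator switches on, and $T'_{kj}(y)=\min\{y_{kj},-A_{kj}(y)\}=y_{kj}$ gives $y_{kj}+A_{kj}(y)\le 0$ for all $j,k$. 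Plugging $h(y_k)=0$ into the rewriting of $f$ yields $f(y)=\max_{j,k}\big(y_{kj}+A_{kj}(y)\big)\le 0$, so $\inf_y f(y)\le 0$ and Theorem~\ref{thm:f-y} delivers an EFX allocation.

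For the converse, I would produce an explicit fixed point from an integral EFX allocation $x^*$, reusing the construction in the proof of Theorem~\ref{thm:f-y}: set $y^*_{\ell i}=0$ if $x^*_{\ell i}=1$ and $y^*_{\ell i}=-M$ otherwise, with $M=2V+1$. Since $x^*$ is a partition, each row of $y^*$ has a single zero entry, so $h(y^*_k)=0$ for all $k$, which both lands $y^*$ in the admissible box and activates the indicator. A direct evaluation of $A_{kj}(y^*)$ (the same per-row case analysis over items assigned to $i$, to $j$, or to a third agent that appears in the proof of Theorem~\ref{thm:f-y}) reduces $y^*_{kj}+A_{kj}(y^*)$ to the EFX inequality $\sum_{\ell\in X^*_i}v_{\ell i}\ge\sum_{\ell\in X^*_j\setminus\{k\}}v_{\ell i}$, giving $y^*_{kj}+A_{kj}(y^*)\le 0$, i.e. $f(y^*)\le 0$; consequently $T'_{kj}(y^*)=\min\{y^*_{kj},-A_{kj}(y^*)\}=y^*_{kj}$ and $y^*$ is a fixed point. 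The main obstacle is purely bookkeeping rather than conceptual: one must carefully verify the $h(y_k)=0$-forcing step (the role of the indicator and the sign of $y_{kj}-h(y_k)$ on the box) and then confirm that the case analysis computing $A_{kj}(y^*)$ reproduces the EFX condition; both are routine given the identities already proved, and it may be convenient to record that $f$ is invariant under adding a constant to any full row of $y$, which is why normalizing $h(y_k)=0$ costs nothing.
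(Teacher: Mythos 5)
Your proposal is correct and follows essentially the same route as the paper: in both arguments the crux is that at a fixed point the indicator forces $h(y_k)=0$ in every row (if $h(y_k)<0$ the second argument of the $\min$ is $0$, so $T'_{kj}(y)=y_{kj}-h(y_k)=y_{kj}$ gives a contradiction), after which the fixed-point equation reads $y_{kj}\le -A_{kj}(y)$, i.e.\ exactly the feasibility system~\eqref{eq:constraints-A}, while the converse produces a fixed point from an EFX allocation via the same $y^*$ with one zero per row and $-M$ elsewhere (the paper phrases this as passing to a \emph{maximal} feasible solution of~\eqref{eq:constraints-A}, which your direct construction from the proof of Theorem~\ref{thm:f-y} instantiates, sparing you the paper's monotonicity claim that raising row maxima to zero preserves feasibility). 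Two harmless slips to fix in the write-up: $y_{kj}-h(y_k)=0$ holds iff $y_{kj}$ equals the row maximum, not iff $h(y_k)=0$ (it is the fixed-point identity $y_{kj}-h(y_k)=y_{kj}$ that forces $h(y_k)=0$, as your displayed argument in fact uses), and since $A_{kj}(y)$ can be negative, $-A_{kj}(y)$ may be positive, so it is the value of the $\min$, not both of its arguments, that lies in $[-M,0]$.
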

\begin{proof}
Let us recall from the proof of Theorem~\ref{thm:T-map} that an EFX allocation exists if and only if there exists a vector $y \in \mathbb{R}^{mn}$ such that
\begin{align}\label{eq:constraints-A}
    y_{kj} - h(y_k) + A_{kj}(y) \leq 0 \quad \forall\, k, j.
\end{align}

Assume there exists a solution $y \in [-M, 0]^{mn}$ satisfying \eqref{eq:constraints-A}, where,  without loss of generality, we may take $h(y_k) = 0$ for all $k$. If this is not the case, we can increase the maximum entry in each row of $y$ to zero, which can only decrease the expressions $y_{kj} - h(y_k) + A_{kj}(y)$, thereby preserving the feasibility of the constraints in \eqref{eq:constraints-A}. We refer to such a feasible solution as a \emph{maximal} feasible solution. Let $y$ be a maximal feasible solution to \eqref{eq:constraints-A}. Then we have
\begin{align*}
    \min\left\{y_{kj} - h(y_k),\ -A_{kj}(y) \cdot \boldsymbol{1}_{\{h(y_k) = 0\}}\right\}
    &= \min\left\{y_{kj} - h(y_k),\ -A_{kj}(y)\right\} \\
    &= y_{kj} - h(y_k) = y_{kj} \quad \forall\, k, j,
\end{align*}
which shows that $y$ must be a fixed point of $T'$.

Conversely, suppose that $y$ is a fixed point of $T'$. Then, $$\min\{y_{kj} - h(y_k),\ -A_{kj}(y) \cdot \boldsymbol{1}_{\{h(y_k) = 0\}}\} = y_{kj}\ \forall k,j.$$ If the minimum is attained strictly at the first argument, then $y_{kj} - h(y_k) = y_{kj}$, which implies $h(y_k) = 0$. Since $y_{kj} - h(y_k) \leq -A_{kj}(y) \cdot \boldsymbol{1}_{\{h(y_k) = 0\}} = -A_{kj}(y)$, it follows that the constraint in~\eqref{eq:constraints-A} is satisfied. On the other hand, the minimum cannot be attained strictly at the second argument. To see this, suppose $-A_{kj}(y) \cdot \boldsymbol{1}_{\{h(y_k) = 0\}} = y_{kj}$. If $h(y_k) = 0$, then $-A_{kj}(y) = y_{kj}$, so $y_{kj} - h(y_k) = y_{kj} = -A_{kj}(y)$, contradicting the assumption that the minimum is strictly at the second argument. Alternatively, if $h(y_k) \neq 0$, then the indicator term is zero, and we must have $-A_{kj}(y) \cdot \boldsymbol{1}_{\{h(y_k) = 0\}} = 0 = y_{kj}$. However, this implies that $h(y_{k}) = 0$, which is a contradiction. Therefore, the minimum must always be attained at the first argument, which implies $h(y_k) = 0$ and thus the constraints in~\eqref{eq:constraints-A} are satisfied. Hence, any fixed point of $T'$ corresponds to a feasible solution to~\eqref{eq:constraints-A}.
\end{proof}

Although the map $T'$ satisfies the self-containedness condition of Brouwer's fixed point theorem, it unfortunately violates the continuity condition due to the presence of the indicator function. This motivates a slight modification of the map $T'$, in which its image is slightly curved near the boundary of its domain to ensure the continuity condition is satisfied. This leads to a new map $\tilde{T}$, for which the existence of a fixed point is guaranteed. More specifically, we consider a continuous approximation of $T'$ where the indicator function $\boldsymbol{1}_{\{h(y_k)=0\}}$ is replaced by the exponential function $e^{h(y_k)}$. Therefore, we define the perturbed vector map $\tilde{T} : [-M, 0]^{mn} \to [-M, 0]^{mn}$ as
\begin{align}\nonumber
\tilde{T}_{kj}(y) = \min\left\{ y_{kj} - h(y_k),\ -A_{kj}(y) e^{h(y_k)} \right\} \quad \forall k, j.
\end{align}

Clearly, $\tilde{T}$ is a continuous map that maps the compact and convex box $[-M, 0]^{mn}$ into itself. Thus, by Brouwer's fixed point theorem, it admits a fixed point. In fact, using the same argument as in the proof of Lemma~\ref{lemm:discrete-map}, if the system~\eqref{eq:constraints-A} has a feasible solution (i.e., if an EFX allocation exists), then the maximal feasible solution must be a fixed point of $\tilde{T}$.

We aim to argue that a fixed point of $\tilde{T}$ serves as a feasible solution (or at least a good approximate feasible solution) to~\eqref{eq:constraints-A}. To that end, let $y = \tilde{T}(y)$ be a fixed point of $\tilde{T}$. Then, either the minimum in the definition of $\tilde{T}$ is achieved by its first argument, in which case we have $$y_{kj} - h(y_k) = y_{kj} \leq -A_{kj}(y) e^{h(y_k)}.$$ This implies that $h(y_k) = 0$, and thus $y_{kj} - h(y_k) \leq -A_{kj}(y)$, satisfying the constraints~\eqref{eq:constraints-A}. Otherwise, if the minimum is achieved by its second argument, then we have $y_{kj} - h(y_k) > -A_{kj}(y) e^{h(y_k)} = y_{kj}$. Therefore, $h(y_k) < 0$, $y_{kj} < 0$, and it must hold that $e^{h(y_k)} = -\frac{A_{kj}(y)}{y_{kj}}$. However, note that this last equality must hold for every $j$ in the $k$th row. Otherwise, if for some $(k, j') \neq (k, j)$ the first case occurs, then we would have $h(y_k) = 0$, contradicting the fact that $h(y_k) < 0$. As a result, in the case where the minimum is achieved by the second argument for some $k$, we have 
\begin{align}\label{eq:final} h(y_k)<0,\ y_{kj}<0, \ e^{-h(y_k)}=-\frac{A_{kj}(y)}{y_{kj}}\ \forall j. 
\end{align}
These impose a set of stringent nonlinear equality constraints among the elements of row $k$, which we believe cannot hold simultaneously. For instance, consider the index $j$ for which $y_{kj} = h(y_k)$ in the relation {eq:final}. Then, it is easy to see that $h(y_k) = O(\ln V)$ and therefore $h(y_k)$ will be close to zero—an ideal case.

In fact, in all the numerical experiments we conducted, we observed that every fixed point of the proxy map $\tilde{T}$ satisfied all the constraints \eqref{eq:constraints-A}, thereby implying the existence of an EFX allocation. We leave it as a future direction to formally prove that the constraints in~\eqref{eq:final} do not admit a feasible solution, or, if they do, that such a solution can be closely rounded to a feasible solution of~\eqref{eq:constraints-A}.

\section{Conclusions}\label{sec:con} 

\noindent
In this work, we presented a novel approach to the EFX allocation existence problem—one of the most important open questions in the fair division of indivisible goods—by bridging it with continuous fixed-point theory and difference-of-convex (DC) optimization. By extending the discrete EFX constraints into a continuous domain via randomized rounding, we established that the existence of an EFX allocation is equivalent to the optimal value of a corresponding DC program being nonpositive. We further showed that this program reduces to the minimization of a piecewise linear concave function over a polytope, and that its optimality is linked to the existence of a fixed point of a carefully defined continuous vector map. Although the original map nearly satisfies the conditions of Brouwer’s fixed-point theorem, we addressed the minor violation of self-containedness by introducing a perturbed version that always admits a fixed point. This fixed point serves as a proxy for an actual EFX allocation. Overall, our framework not only sheds new light on the theoretical existence of EFX allocations, but also offers practical pathways for their computation through tools from nonlinear optimization and fixed-point theory.

There are several promising directions for future work. One is to investigate whether a better randomized rounding scheme exists that yields a tighter description of the corresponding vector map, potentially enabling one to show that it satisfies a known fixed-point theorem. Another direction is to formally prove Conjecture~\ref{conj-map}. A third avenue is to establish that any fixed point of the approximate map $\tilde{T}$ either corresponds directly to an EFX allocation or can be efficiently rounded to one. Finally, recent advances in duality theory for DC programming could be leveraged to either prove the nonpositivity of the optimal objective value of the proposed DC program or to develop more computationally efficient algorithms for computing an EFX allocation, when one exists.

\bibliographystyle{IEEEtran}
\bibliography{thesisrefs}

%\appendices

\end{document}